\documentclass[english]{article}

\usepackage{amsmath,amsfonts,amsthm}
\usepackage{graphicx}
\usepackage{epstopdf}

\usepackage{booktabs}

\usepackage{tikz,tkz-tab}
\usepackage{hyperref}
\usepackage[overload]{empheq} 
\usepackage{color}
\usepackage{float}
\floatplacement{figure}{H}

\def\cN{{\cal N}}   
\newsavebox\myuglybox
\def\mye#1{                 
  \savebox{\myuglybox}{$#1$}
  \resizebox{!}{\ht\myuglybox}{$\mathrm{e}$}^{\usebox{\myuglybox}}} 

\newtheorem{thm}{Theorem}

\topmargin -0.75in
\setlength{\oddsidemargin}{-0.2in}
\setlength{\textwidth}{6.5in}
\textheight 9.0in


\begin{document}
\title{Optimal Control Approach for Implementation of Sterile Insect Techniques}
\author{Pierre-Alexandre Bliman$^{1}$, Daiver Cardona-Salgado$^2$, Yves Dumont$^{3,4,5}$, Olga Vasilieva$^6$\footnote{Corresponding author: olga.vasilieva@correounivalle.edu.co}\\
$^1$ \small Sorbonne Universit\'e, Universit\'e Paris-Diderot SPC, \color{black} Inria, CNRS\\ \small Laboratoire Jacques-Louis Lions, \'equipe Mamba, \color{black}  Paris, France\\
$^2$ \small Universidad Aut\'{o}noma de Occidente, Cali, Colombia \\
$^3$ \small CIRAD, Umr AMAP, Pretoria, South Africa \\
$^4$ \small AMAP, Univ Montpellier, CIRAD, CNRS, INRA, IRD, Montpellier, France, \\
$^5$ \small University of Pretoria, Department of Mathematics and Applied Mathematics, South Africa\\ 
$^6$ \small Universidad del Valle, Cali, Colombia
}
\date{}
\maketitle

\begin{abstract}
Vector or pest control is essential to reduce the risk of vector-borne diseases or crop losses. Among the available biological control tools, the Sterile Insect Technique (SIT) is one of the most promising. However, SIT-control campaigns must be carefully planned in advance in order to render desirable outcomes. In this paper, we design SIT-control intervention programs that can avoid the real-time monitoring of the wild population and require to mass-rear a minimal overall number of sterile insects, in order to induce a local elimination of the wild population in the shortest time. Continuous-time release programs are obtained by applying an optimal control approach, and then laying the groundwork of more practical SIT-control programs consisting of periodic impulsive releases.  \\ \\
\emph{Keywords:} pest control, vector control, sterile insect technique, optimal control, periodic impulsive control, open-loop control.
\end{abstract}

\tableofcontents

\newpage

\section{Introduction}
The presence and abundance of mosquito species is strongly correlated with progressive dissemination and persistence of various vector-borne diseases in human populations, and thus constitutes a major issue to health-care authorities in many countries. In the absence of effective vaccine against malaria (transmitted by \emph{Anopheles} mosquito species), encephalitis and West Nile virus (transmitted by \emph{Culex} species), as well as dengue fever, Chikungunya and Zika viruses (spread by \emph{Aedes} species), the control of these infectious diseases is primarily centered on suppression of the local mosquito populations through the use of chemical substances (larvicides, insecticides).

However, due to the negative environmental impact of chemical substancies on non-target species and the detected resistance developed by mosquitoes to insecticides and larvicides, other methods have been proposed and evaluated. One of them is the so-called Sterile Insect Technique (SIT) that relies on massive releases of male insects sterilized by ionizing radiations or using incompatible strains of \emph{Wolbachia} symbiont. After mating with sterilized or \emph{Wolbachia}-carrying males, female mosquitoes duly lay eggs that never hatch into larvae, and the latter induces a progressive suppression of the target wild population (see \cite{Dyck2006} regarding the detailed description of SIT and its applications).

Similarly, pest, like fruit flies, are responsible of considerable losses in crops, and orchards, around the world. For instance, \textit{Bactrocera dorsalis}, a highly invasive species, was found in South Africa in 2010 \cite{Manrakhan2015}. It was first monitored in R\'eunion island in 2017\footnote{See the report given in  http://daaf.reunion.agriculture.gouv.fr/Detection-d-une-nouvelle-mouche}, and was also detected for the first time in Europe, southern Italy, in 2018. Unfortunately, to control such a pest, very few tools, apart from insecticides, exist.

SIT-control interventions have been modeled and studied theoretically and numerically in a large number of papers using temporal and spatiotemporal models with discrete, continuous or hybrid time scaling while applying open-loop, closed-loop or mixed modeling approaches to derive continuous-time and/or periodic impulsive release programs (see, e.g.,  \cite{Anguelov2012,Bliman2019,Dumont2012,Huang2017,Li2015,Strugarek2019} and references therein). However, the proposed programs for SIT-control required either to go on with excessively abundant releases for a considerable time or to monitor the current sizes of wild populations in order to reach an eventual elimination. Several scholars have also applied the dynamic optimization approach seeking to minimize the overall number of sterile insects to be released during the SIT-control campaign while anticipating a certain fixed duration of the control intervention \cite{Fister2013,Multerer2019}.

The principal  goal of this paper is to design the SIT-control intervention programs that exhibit a better tradeoff between different objectives set by the decision-makers. The primary objective of SIT-control campaigns is to ensure the elimination or drastic reduction in the number of wild insects within some target locality in a finite time. Other (secondary) objectives are related to reducing the underlying costs of SIT-control interventions, which are usually expressed by key indicators such as the overall duration of the SIT-control campaign, the number of releases to be performed, and the number of sterilized males to be mass-reared for releases. Some SIT-control programs require for periodic assessments of the wild population sizes in order to determine the sizes of subsequent releases. Such measurements constitute extra costs of the SIT-control programs and must be accounted for. Therefore, it is essential to identify the better tradeoff between different secondary objectives and to design a SIT-control program that requires lower overall costs. In other words, we are seeking to design the control intervention programs that require not only a moderate control effort (expressed through the cumulative number of sterile insects to be released), but also induce local elimination of wild insects in a reasonable time and without compelling for real-time assessments of the wild population sizes.

For that purpose, we make use of an entomological model initially proposed and thoroughly analyzed in \cite{Bliman2019} (its key features are given in Section ~\ref{sec-model}) and then formulate an optimal control problem with free terminal time and a multi-criteria objective functional (Section~\ref{sec-oc}). This problem is reduced, by applying the traditional technique \cite{Lenhart2007}, to an optimality system with a time-optimality condition that is solved numerically to obtain the optimal release programs in continuous time while considering different priorities in the decision-making. Furthermore, we adopt the continuous-time optimal release programs for practical implementation and propose ``almost optimal'' (further referred to as \emph{suboptimal}) SIT-control programs consisting of periodic impulsive releases of sterile insects. For the numerical simulations we consider the case of \textit{Aedes spp} mosquito population, like in \cite{Bliman2019}. The results of all numerical simulations together with an underlying discussion that highlights the performance of the purely open-loop suboptimal release programs as well as their comparison with mixed open/closed-loop impulsive release programs \cite{Bliman2019} are presented in Section ~\ref{sec-num}, and Section~\ref{sec-con} provides some final remarks and conclusions.

\section{A sex-structured entomological model involving sterile males}
\label{sec-model}

Let $M(t)$ and $F(t)$ be the current population sizes of males and females present in some target locality at the time $t \geq 0.$ Without loss of generality, we assume in the sequel that time is measured in days. Suppose that sterile males are being released in this locality at the per-day rate $u(t) \geq 0$ and denote their population size  $M_S(t)$  at each $t \geq 0$. The release rate can be modeled by a piecewise continuous function $u(\cdot) \in PC$ defined for all $t \geq 0$. This function represents the decision or control variable and expresses the number of sterile males $M_S$ introduced at the day $t$ in the target locality. In other words, $u(t), t \geq 0$ defines an external control action over the natural population dynamics of the wild subpopulations, $M$ and $F$.

Using the modeling framework proposed by Bliman \emph{et al} \cite{Bliman2019}, the population dynamics of wild insects ($M$ and $F$) and their interaction with sterile males ($M_S$) can be described by the following three-dimensional ODE system:

\begin{subequations}
\label{SITsys}
\begin{align}[left = \empheqlbrace\,]
\label{SITsys-a}
\dot M &=r\rho\dfrac{FM}{M+\gamma M_S} \mye{-\beta(M+F)}-\mu_M M, & & \hspace{-1mm} M(0) =M^0 > 0\\
\label{SITsys-b}
\dot F &=(1-r)\rho\dfrac{FM}{M+\gamma M_S} \mye{-\beta(M+F)}-\mu_F F, & & F(0) =F^0 > 0\\
\label{SITsys-c}
\dot M_S &=u(t)-\mu_S M_S, & & \hspace{-3mm} M_S(0) = M_S^0 \geq 0.
\end{align}
\end{subequations}
All the parameters of the model \eqref{SITsys} are positive, and their detailed descriptions are provided in Table~\ref{tab1}. The model assumes that all females $F$ are equally
able to mate, with either wild males $M$ or sterile males $M_S$. However, viable offspring is produced by wild females only after their successful mating with wild males. Therefore, the recruitment rate of wild insects (cf. positive terms in Eqs. \eqref{SITsys-a}-\eqref{SITsys-b}) depends on the probability $\dfrac{M}{M+ \gamma M_S} $ of effective contacts (or successful matings) between wild males and females. It is worth noting that sterilization of mass-reared male mosquitoes may affect their mating competitiveness \cite{Oliva2012}, and the latter is expressed in the model by the coefficient $0 < \gamma \leq 1$ denoting the relative mating efficiency or reproductive fitness of sterilized males (compared to the wild ones).

Eqs. \eqref{SITsys-a}-\eqref{SITsys-b} state that new individuals are recruited as either males or females in accordance with a primary sex ratio $r \div (1-r), r \in (0,1) $ in viable offsprings that survive to  adulthood. Here  $\rho$  stands for the mean number of eggs deposited by a single female in average per day, while direct and/or indirect competition effect at different stages (larvae, pupae, adults) are included through the parameter $\beta$. Thus, lower values of $\beta$ imply that a larger fraction of eggs may survive to adulthood, whereas its higher values express stronger competition and fewer breeding sites.

\begin{table}[h]
\centering
\begin{tabular}{|c|l|c|c|}
\hline
\textbf{Parameter} & \textbf{Description} & \textbf{Unit} & \textbf{Value (\textit{Aedes spp})} \\ \hline
$r$      & Primary sex ratio                                        & --         & $0.5$ \\
$\rho$   &  Average number of eggs deposited per female per day     & day$^{-1}$ & $4.55$ \\
$\beta$  &  Characteristic of the competition effect per individual & --         & $3.57 \times 10^{-4}$  \\
$\gamma$ &  Coefficient of mating efficiency of sterile males       & --         & $1$    \\
$\mu_M$  &  Natural mortality rate for male mosquitoes    & day$^{-1}$ & $0.04$ \\
$\mu_F$  &  Natural mortality rate for female mosquitoes  & day$^{-1}$ & $0.03$ \\
$\mu_S$  &  Natural mortality rate for sterile males      & day$^{-1}$ & $0.04$ \\
\hline
\end{tabular}
\caption{\textit{Aedes spp} mosquito parameters of the model \eqref{SITsys} with values borrowed from \cite{Bliman2019}}
\label{tab1}
\end{table}

Parameters $\mu_M, \mu_F,$ and $\mu_S$ in \eqref{SITsys} represent, respectively, the natural sex-specific mortality rates for wild males, wild females, and sterile males. Scientific evidence attests that mating females have higher average longevity than mating males \cite{Liles1965}, whereas sterilization of males may reduce their average lifespan \cite{Oliva2012}. Therefore, we suppose throughout the paper that:
\begin{equation}
\label{death-rates}
\mu_S \geq \mu_M \geq \mu_F.
\end{equation}

Finally, Eq. \eqref{SITsys-c} states that sterile males $M_S$ are ``exogenously'' released at a time-dependent instantaneous rate $u(t)$. Here we assume that $u: \mathbb{R}_+ \mapsto [0, u_{\max}]$ is a piecewise continuous function, and $u_{\max} >0$ is the maximum capacity of releases. For a known $u(t)$, Eq. \eqref{SITsys-c} can be explicitly solved:
\begin{equation}
\label{M_S-u-sol}
M_S \big( t; u(\cdot) \big)=\mye{-\mu_S t}  \left[ M_S^0 + \int \limits_0^t \mye{\mu_S \xi} u(\xi) d\xi \right]
\end{equation}

According to \eqref{M_S-u-sol}, $M_S \big( t;u(\cdot) \big)$ remains nonnegative and bounded from above whenever $u(t)$ is nonnegative and bounded from above. Additionally, the right-hand sides of Eqn. \eqref{SITsys-a}-\eqref{SITsys-b} are decreasing in $M_S(t)$. On the other hand, Bliman \emph{et al} \cite{Bliman2019} have shown that, in absence of sterile males (that is, with $M_S(t) = 0$), the sex-structured entomological model\footnote{This model is obtained from Eqs. \eqref{SITsys-a}-\eqref{SITsys-b} when $M_S(t)= 0$ for al $t \geq 0$.} has bounded trajectories that belong to an \emph{absorbing} set
\begin{equation}
\label{d-set}
\mathcal{D} = \Big\{ \big( M(t), F(t) \big): \: 0 \leq M(t) \leq C, \: 0 \leq F(t) \leq C, \:\: \forall \: t \geq 0 \Big\}
\end{equation}
for some $C>0$.

Since the right-hand sides of the dynamical system in \eqref{SITsys} are Lipschitz-continuous, the initial-value problem \eqref{SITsys} has a unique solution for all piecewise continuous and bounded $u(t) \in [0,u_{\max}], t \geq 0$.

In the context of this model, the upper bound of $u(t)$ (expressed by the maximum release capacity $u_{\max}$) plays an essential role, and its value must be defined in accordance with the model's parameters (presented in Table ~\ref{tab1}). A recent study performed by Bliman \emph{et al} \cite{Bliman2019} can shed some light on an adequate choice of $u_{\max}$.

In particular, the authors of \cite{Bliman2019} had thoroughly analyzed the case when $u(t) = \Lambda, \ \Lambda=\text{const} >0$ for all $t\geq 0$. Under such settings, Eq. \eqref{SITsys-c} becomes completely decoupled from the system \eqref{SITsys} and its steady-state value $M_S^{\sharp} = \dfrac{\Lambda}{\mu_S}$ can be replaced in Eqs.  \eqref{SITsys-a}-\eqref{SITsys-b} thus leading to the reduced ODE system
\begin{subequations}
\label{SIT-reduced}
\begin{align}[left = \empheqlbrace\,]
\label{SIT-reduced-a}
\dot M &=r\rho\dfrac{FM}{M+\gamma M_S^{\sharp}} \mye{-\beta(M+F)}-\mu_M M \\
\label{SIT-reduced-b}
\dot F &=(1-r)\rho\dfrac{FM}{M+\gamma M_S^{\sharp}} \mye{-\beta(M+F)}-\mu_F F
\end{align}
\end{subequations}
for naturally persistent wild populations. Let us recall that pest or mosquito populations are referred to as ``naturally persistent'' when one individual (male or female) produces, on average, more than one individual. It fact, Bliman \emph{et al} \cite{Bliman2019} introduced two positive constants, $\cN_F$ and $\cN_F$, that represent the so-called {\em basic offspring numbers} related to the wild female and male populations, respectively. In the sequel, we will be dealing with naturally persistent insect populations. Therefore, we suppose that for the parameters of the model (defined in Table~\ref{tab1}) it holds that
\begin{equation}
\label{bon}
\cN_F:=\dfrac{(1-r)\rho}{\mu_F } > 1, \qquad \cN_M:=\dfrac{r\rho}{\mu_M } > 1.
\end{equation}

For the reduced system \eqref{SIT-reduced}, the following fundamental results has been established in \cite{Bliman2019}.

\begin{thm}[Existence of positive equilibria for the reduced SIT system \eqref{SIT-reduced}, \cite{Bliman2019}]
\label{th1}
Assume that $\cN_F >1$. Then there exists  $\Lambda^{\text{crit}}>0$ such that system \eqref{SIT-reduced} admits
\begin{itemize}
\item
two positive distinct  equilibria if $0 < \Lambda < \Lambda^{\text{crit}}$,
\item
one positive equilibrium if $\Lambda = \Lambda^{\text{crit}}$,
\item
no positive equilibrium if $\Lambda > \Lambda^{\text{crit}}$.
\end{itemize}
The value of $\Lambda^{\text{crit}}$ is uniquely determined by the formula
\begin{equation}
\label{lambda_crit}
\Lambda^{\text{crit}} := 2 \: \frac{\mu_S}{\beta\gamma} \: \frac{\phi^{\text{crit}}(\cN_F)}{1+\frac{\cN_F}{\cN_M}},
\end{equation}
where $\phi^{\text{crit}}:=\phi^{\text{crit}}(\cN_F)$ is the unique positive solution to the transcendental equation
\begin{equation}
\label{phi}
1+ \phi\left( 1+ \sqrt{1+\frac{2}{\phi}} \right) = \cN_F \exp\left[ -\frac{2}{1+ \sqrt{1+\dfrac{2}{\phi}}} \right].
\end{equation}
\end{thm}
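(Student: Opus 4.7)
The plan is to reduce the two-variable fixed-point problem for \eqref{SIT-reduced} to a single scalar equation, analyze the resulting one-dimensional function, and then rewrite its maximum in the parametric form \eqref{phi}. First, I would write the equilibrium equations obtained by setting the right-hand sides of \eqref{SIT-reduced-a}--\eqref{SIT-reduced-b} to zero and dividing by $M$ and $F$ (both positive), giving
\begin{equation*}
\mathcal{N}_M \, \frac{F}{M+\gamma M_S^{\sharp}}\,\mye{-\beta(M+F)} = 1, \qquad \mathcal{N}_F \, \frac{M}{M+\gamma M_S^{\sharp}}\,\mye{-\beta(M+F)} = 1.
\end{equation*}
Taking the ratio immediately yields the linear relation $F = (\mathcal{N}_F/\mathcal{N}_M)\,M$, so positive equilibria lie on a ray and are parametrized by $M>0$.

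Substituting $F=(\mathcal N_F/\mathcal N_M)M$ and $M_S^{\sharp}=\Lambda/\mu_S$ into the second equation, and introducing the rescaled unknown $x := \beta(1+\mathcal N_F/\mathcal N_M)M$, I would reduce the equilibrium condition to the scalar equation
\begin{equation*}
g(x) := x\bigl(\mathcal{N}_F \mye{-x} - 1\bigr) \;=\; \frac{\beta\gamma}{\mu_S}\Bigl(1+\frac{\mathcal N_F}{\mathcal N_M}\Bigr)\Lambda.
\end{equation*}
The number of positive equilibria of \eqref{SIT-reduced} then equals the number of roots $x>0$ of this equation. Since $g(0)=0$ and $g(\ln\mathcal N_F)=0$, and $g$ is continuous, counting roots reduces to a unimodality argument for $g$ on $(0,\ln\mathcal N_F)$. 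A direct computation gives $g'(x)=\mathcal N_F\mye{-x}(1-x)-1$; examining the auxiliary function $h(x):=\mye{x}-\mathcal N_F(1-x)$, which is strictly increasing with $h(0)<0<h(\ln\mathcal N_F)$ under the hypothesis $\mathcal N_F>1$, produces a unique critical point $x^\ast\in(0,\ln\mathcal N_F)$. Hence $g$ is strictly concave-shaped on this interval with a unique maximum, and the trichotomy (two, one, or zero positive equilibria) follows at once by comparing the right-hand side of the scalar equation to $g(x^\ast)$, giving $\Lambda^{\mathrm{crit}} = g(x^\ast)\,\mu_S/[\beta\gamma(1+\mathcal N_F/\mathcal N_M)]$.

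The last step is to massage $g(x^\ast)$ into the closed form \eqref{lambda_crit}--\eqref{phi}. From $g'(x^\ast)=0$ I get $\mathcal N_F\mye{-x^\ast}=1/(1-x^\ast)$, and substituting back yields $g(x^\ast)=x^{\ast 2}/(1-x^\ast)$. Setting $\phi := x^{\ast 2}/\bigl(2(1-x^\ast)\bigr)$ and solving the resulting quadratic for $x^\ast$ (taking the positive root) gives the neat identity $x^\ast = 2/\bigl(1+\sqrt{1+2/\phi}\bigr)$, hence $1-x^\ast=(\sqrt{1+2/\phi}-1)/(\sqrt{1+2/\phi}+1)$. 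Plugging these into $\mathcal N_F(1-x^\ast)=\mye{x^\ast}$ produces exactly the transcendental equation \eqref{phi}, and uniqueness of $\phi^{\mathrm{crit}}$ follows from the uniqueness of $x^\ast$ through the monotone change of variable $\phi\mapsto x^\ast$. With $g(x^\ast)=2\phi^{\mathrm{crit}}$, the formula \eqref{lambda_crit} drops out.

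I expect the main obstacle to be purely bookkeeping: the substitution identifying $\phi^{\mathrm{crit}}$ with the critical parameter requires solving a quadratic, choosing the correct root, and verifying that the monotone bijection between $x^\ast\in(0,1)$ and $\phi>0$ is well-defined so that the condition $\mathcal N_F>1$ is preserved throughout. Nothing deep is hidden here — the analysis of $g$ is standard one-variable calculus, and the only genuine input is the hypothesis $\mathcal N_F>1$, which is precisely what guarantees $g'(0)=\mathcal N_F-1>0$ and hence the existence of the interior maximum.
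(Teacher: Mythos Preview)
Your argument is correct and complete: the reduction to the scalar equation $g(x)=x(\mathcal N_F\mye{-x}-1)=\text{const}$ via $F=(\mathcal N_F/\mathcal N_M)M$ is the natural one, the unimodality analysis of $g$ is sound (the unique zero of $g'$ follows from strict monotonicity of $h$, and $x^\ast<1$ is forced by $\mathcal N_F\mye{-x^\ast}(1-x^\ast)=1$), and the algebraic identification $g(x^\ast)=2\phi^{\mathrm{crit}}$ together with the rationalized form $1/(1-x^\ast)=1+\phi(1+\sqrt{1+2/\phi})$ recovers \eqref{phi} and \eqref{lambda_crit} exactly.

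Note, however, that the present paper does \emph{not} supply its own proof of Theorem~\ref{th1}: it merely quotes the statement and refers the reader to \cite{Bliman2019} for the argument. So there is no in-paper proof to compare against; your write-up would in fact constitute a self-contained proof that the paper lacks. The approach you take is the standard one and is, to my knowledge, essentially the route followed in \cite{Bliman2019} as well. One cosmetic remark: calling $g$ ``concave-shaped'' is imprecise (it is not globally concave), but what you actually use --- unimodality on $(0,\ln\mathcal N_F)$, which follows from the sign change of $g'$ at its unique zero --- is exactly right, so this is only wording.
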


\begin{thm}[Stability of the insect-free equilibrium, \cite{Bliman2019}]
\label{th2}
If the reduced system \eqref{SIT-reduced} admits no positive equilibrium (that is, if $\Lambda>\Lambda^{\text{crit}}$), then the trivial equilibrium $ (0,0) \in \mathbb{R}^2$ of \eqref{SIT-reduced} is globally exponentially stable.
\end{thm}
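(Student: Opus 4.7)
The plan is to combine linearization at the origin with a planar phase-plane analysis---dissipativity plus a Bendixson--Dulac argument---to obtain global attraction, and then to upgrade attraction to a uniform exponential estimate via a compactness argument on the absorbing set.

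First I would record the structural facts. The coordinate axes are forward-invariant for \eqref{SIT-reduced} because the reproduction term $FM/(M+\gamma M_S^{\sharp})$ vanishes on each axis, so solutions starting in $\mathbb{R}_+^2$ stay there. Using $FM/(M+\gamma M_S^{\sharp})\le F$ and $e^{-\beta(M+F)}\le e^{-\beta F}$ in \eqref{SIT-reduced-b} yields $\dot F\le (1-r)\rho F e^{-\beta F}-\mu_F F$, and since $x\mapsto xe^{-\beta x}$ is uniformly bounded, $F(t)$ is ultimately bounded; a similar argument for $M$ (or comparison with the $M_S\equiv 0$ system and its absorbing set $\mathcal{D}$) yields a compact absorbing set in $\mathbb{R}_+^2$, so the semiflow is dissipative. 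For local exponential stability at $(0,0)$, I would observe that $\gamma M_S^{\sharp}>0$ implies $FM/(M+\gamma M_S^{\sharp})=O(MF)$ near the origin and $e^{-\beta(M+F)}=1+O(M+F)$, whence the Jacobian of \eqref{SIT-reduced} at $(0,0)$ equals $\operatorname{diag}(-\mu_M,-\mu_F)$. Both eigenvalues are strictly negative, so $(0,0)$ is locally exponentially stable with rate $\min(\mu_M,\mu_F)$.

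The decisive step is the extension to global attraction. By Theorem~\ref{th1}, when $\Lambda>\Lambda^{\text{crit}}$ the origin is the \emph{only} equilibrium of \eqref{SIT-reduced} in $\overline{\mathbb{R}_+^2}$, so the Poincaré--Bendixson theorem concludes once periodic orbits are excluded. I would rule these out via Bendixson--Dulac on the (simply connected) open quadrant $\{M>0,\,F>0\}$ with weight $\psi(M,F)=1/(MF)$: a direct calculation gives
\[
\operatorname{div}(\psi f)=-\frac{r\rho\,e^{-\beta(M+F)}}{(M+\gamma M_S^{\sharp})^2}-\frac{\rho\beta\,e^{-\beta(M+F)}}{M+\gamma M_S^{\sharp}}<0
\]
throughout the interior, so \eqref{SIT-reduced} admits no closed orbit in $\{M>0,\,F>0\}$. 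On each axis the dynamics reduces to $\dot M=-\mu_M M$ or $\dot F=-\mu_F F$, which converge to $(0,0)$ exponentially. Combined with dissipativity, Poincaré--Bendixson then forces every $\omega$-limit set of trajectories of \eqref{SIT-reduced} in $\mathbb{R}_+^2$ to equal $\{(0,0)\}$, yielding global attraction.

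To obtain the full exponential statement I would use a standard compactness argument: the absorbing set is compact and the flow is continuous in the initial data, so from global attraction every trajectory enters any prescribed neighbourhood $U$ of $(0,0)$ within a uniform time $T(U)$; taking $U$ inside the region where the linearization dominates the nonlinear perturbation and invoking the local exponential rate from the first step produces constants $K,\lambda>0$ such that $|(M(t),F(t))|\le K\,|(M^0,F^0)|\,e^{-\lambda t}$ for all initial conditions in $\mathbb{R}_+^2$. I expect this final step---turning qualitative global attraction into a uniform global exponential bound---to be the main technical hurdle, since the Dulac computation is routine and the linearization is immediate; the alternative route, constructing a global Lyapunov function with quadratic bounds for \eqref{SIT-reduced}, is made delicate by the Ricker-type recruitment $e^{-\beta(M+F)}$ and does not seem to be available off the shelf.
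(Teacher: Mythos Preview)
The present paper does not actually prove Theorem~\ref{th2}: immediately after stating Theorems~\ref{th1} and~\ref{th2} the authors write that ``Formal proofs of Theorems~\ref{th1} and~\ref{th2} are given in \cite{Bliman2019}'' and move on. There is therefore no in-paper argument to compare your proposal against.

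Taken on its own merits, your plan is sound and self-contained. The Dulac computation with weight $\psi=1/(MF)$ is correct (the divergence you display is exactly what one obtains), the coordinate axes are invariant with pure linear decay $\dot M=-\mu_M M$, $\dot F=-\mu_F F$, and the Jacobian at $(0,0)$ is indeed $\operatorname{diag}(-\mu_M,-\mu_F)$ because $\gamma M_S^{\sharp}>0$ makes the recruitment term $O(MF)$ near the origin. With no positive equilibrium (the hypothesis $\Lambda>\Lambda^{\text{crit}}$ via Theorem~\ref{th1}) and no closed orbit in the open quadrant, Poincar\'e--Bendixson forces every bounded forward orbit to converge to $(0,0)$, and the dissipativity you invoke through the absorbing set $\mathcal{D}$ supplies the required boundedness. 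Your identification of the last step---upgrading global attraction to a uniform exponential estimate---as the only genuinely technical point is accurate; the compactness argument you sketch yields uniform exponential decay from any compact subset of $\mathbb{R}_+^2$, which is the standard content of ``global exponential stability'' for a dissipative planar system, and the crude one-sided bounds $\dot M\le r\rho/(e\beta)-\mu_M M$, $\dot F\le (1-r)\rho/(e\beta)-\mu_F F$ control the entry time into the absorbing set if a fully uniform constant is desired.
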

Formal proofs of Theorems ~\ref{th1} and ~\ref{th2} are given in \cite{Bliman2019}. These theorems jointly provide a threshold value $\Lambda_{\text{crit}} > 0$ of the constant release rate such that for all \emph{constant} release rates $ u(t) = \Lambda > \Lambda_{\text{crit}}, t \geq 0$ the trajectories of the SIT system \eqref{SITsys} engendered by any nonnegative set of initial conditions $\big( M^0, F^0, M^0_S \big)$ converge globally exponentially toward its unique equilibrium $E^{\sharp}_0= \big( 0,0,  M_S^{\sharp} \big)$ free of wild insects, and the latter ultimately ensures an eventual elimination of the wild population.

Therefore, it is reasonable to take $u_{\max}$ larger than $\Lambda_{\text{crit}}$.

In the following section, we apply the dynamic optimization approach in order to design a time-dependent release program under which local elimination of the wild population can be reached in minimum time. Such program is characterized by the shape of piecewise continuous function $u(t)$ that expresses for each $t$ an instantaneous release rate of sterile males.

\section{Optimal control approach}
\label{sec-oc}

Our primary goal is to define the time-dependent release rate $u^{*}(t)$ that ensures local elimination of the wild population in minimum time $T^{*} \in (0, \infty)$ while also minimizing the total (cumulative) number of sterile males to be released in the target locality during the whole campaign, that is, during the period $[0, T^{*}]$.

Let us suppose that the terminal time of control action $0 < T < \infty$ is set free, and then define several objectives of the decision-making:
\begin{enumerate}
\item
Reaching elimination of the wild population in finite time.
\item
Minimizing the overall time of control action.
\item
Minimizing the underlying costs of control action.
\end{enumerate}

The first objective can be formalized by imposing an endpoint condition
\begin{equation}
\label{endpoint}
 F(T) = \varepsilon,
\end{equation}
where $\varepsilon \to 0^+$ is specified by the decision-maker. Strictly speaking, the endpoint condition \eqref{endpoint} should be of the form $F(T^{*})= 0$. However, given the exponential nature of state equations \eqref{SITsys-a}-\eqref{SITsys-b}, the trajectory of $F(t)$ may only approach zero asymptotically when $t \to \infty$ but it cannot reach this value in finite time. Therefore, reaching elimination of wild females at some finite $0< T< \infty$ can be modeled by \eqref{endpoint} with some small enough $\varepsilon >0$.
Elimination of wild females inevitably entails the elimination of wild male insects, since wild males are progenies of wild females. Therefore, condition (1) ensures the elimination of the total wild population (males and females).

Time minimization (the second objective) can be expressed in the standard way by the identity $T=\int_0^T dt$, while the costs associated with a certain release program $u(t), t \in [0,T]$ (the third objective) are conveyed through the cumulative number of sterile males (CNSM) released during the whole campaign, which is exactly assessed by the following formula:
\begin{equation}
\label{cnsm}
\text{CNSM}(u) = \int \limits_0^{T} u(t) dt.
\end{equation}

Thus, the above-mentioned three objectives can be altogether expressed by the following performance index (or objective functional):
\begin{equation}
\label{of}
\mathcal{J}(u,T)= A_1 \Big(F(T) - \varepsilon \Big)^2 + \int \limits_0^T \left[ A_2 F(t) +  A_3 + \frac{1}{2} A_4 u^2(t)  \right] dt
\end{equation}

The nonnegative coefficients $A_i, i =1,2,3,4$ in \eqref{of} should be adequately chosen in order to reflect the desirable priorities of decision-making. Namely, $A_1$ defines the top priority of reaching elimination in finite time\footnote{Here $A_1$ stands for a penalty for violation of the endpoint condition \eqref{endpoint}, and therefore we suppose that $A_1 \gg A_i, i=2,3,4$.}, $A_2$ characterizes the importance of ``elimination intensity'' of wild females $F(t)$ during the release campaign, $A_3$ expresses the time appreciation $\left( T= \int_0^T dt \right)$, and $A_4$ refers to the costs of control effort (i.e., mass-rearing of sterile males).

It is worthwhile to point out that we assume no linear relationship between the coverage of control interventions and their respective costs. However, we do assume that the \emph{marginal} cost of control action, i. e. $A_4 u(t)$, is proportional to the control effort at each $t \in [0, T ]$ which, in its turn, is expressed by the number sterile males to be released at each $t$. For that reason, the integrand in \eqref{of} is assumed quadratic with respect to control variable $u$. Under such assumption, the ``direct'' costs of a certain release program $u(t), t \in [0,T]$ can be expressed by $A_4 \times \text{CNSM}(u),$ where $\text{CNSM}(u)$ is given by \eqref{cnsm}, while other ``indirect'' costs can be assessed by the length $T$ of the SIT-control intervention. We are aware that this assumption is not very appealing but, at this stage, it is the most convenient one.

Our goal is to find an optimal release program $u^{*}(t) \in [0, u_{\max}], t \in [0, T^{*}]$ and the minimum time $T^{*} \in (0, \infty)$ that jointly minimize the performance index \eqref{of} subject to the wild population dynamics given by \eqref{SITsys}. In other words, we seek to solve the following optimal control problem:
\begin{subequations}
\label{ocp}
\begin{equation}
\label{of-ocp}
\hspace{-12mm} \min \limits_{\scriptsize \begin{array}{c} 0 \leq u(t) \leq u_{\max} \\ 0 < T < \infty \end{array}} \!\!\!
\mathcal{J}(u,T)= A_1 \Big(F(T) - \varepsilon \Big)^2 + \int \limits_0^T \left[ A_2 F(t) +  A_3 + \frac{1}{2} A_4 u^2(t)  \right] dt
\end{equation}
subject to
\begin{align}[left = \empheqlbrace\,]
\label{sys-ocp-a}
\dot M &=r\rho\dfrac{FM}{M+\gamma M_S} \mye{-\beta(M+F)}-\mu_M M, & & \hspace{-1mm} M(0) =M^0 > 0\\
\label{sys-ocp-b}
\dot F &=(1-r)\rho\dfrac{FM}{M+\gamma M_S} \mye{-\beta(M+F)}-\mu_F F, & & F(0) =F^0 > 0\\
\label{sys-ocp-c}
\dot M_S &=u(t)-\mu_S M_S, & & \hspace{-3mm} M_S(0) = M_S^0 \geq 0.
\end{align}
with
\begin{equation}
\label{cont-ocp}
u(\cdot) \in PC, \quad 0 \leq u(t) \leq u_{\max}, \quad \forall \; t \in [0,T].
\end{equation}
\end{subequations}

Existence of solution to the optimal control problem \eqref{ocp} is justified by the following result.
\begin{thm}
\label{th3}
Assume that $\cN_F >1$ and $u_{\max} > \Lambda_{\text{crit}}$. Then optimal control problem \eqref{ocp} has a solution $\big( u^{*}, T^{*} \big)$ such that
\begin{equation*}
\label{of-ocp-max}
\min \limits_{\scriptsize \begin{array}{c} 0 \leq u(t) \leq u_{\max} \\ 0 < T < \infty \end{array}} \!\!\!
\mathcal{J}(u,T)= \mathcal{J} \big( u^{*}, T^{*} \big).
\end{equation*}
\end{thm}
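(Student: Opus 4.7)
The plan is to apply the direct method of the calculus of variations, exploiting that the cost functional depends quadratically (and hence convexly) on $u$, while the control enters the state equations affinely through \eqref{sys-ocp-c}. The three ingredients to assemble are: (i) an admissible pair giving a finite infimum and a priori bounds along minimizing sequences; (ii) extraction of a limit candidate $(u^*, T^*)$; and (iii) lower semi-continuity of the functional along that limit.

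\emph{Admissibility and a priori bounds.} Because $u_{\max} > \Lambda^{\text{crit}}$, Theorem~\ref{th2} applied to the constant release $u \equiv u_{\max}$ yields $F(t) \to 0$ exponentially, so some finite $T_0$ with $F(T_0) = \varepsilon$ exists; this gives an admissible pair and $\mathcal{J}^* := \inf \mathcal{J}(u,T) < \infty$. For a minimizing sequence $\{(u_n, T_n)\}$, the bound $A_3 T_n \le \mathcal{J}(u_n,T_n) \le \mathcal{J}^* + 1$ yields a uniform upper bound $T_n \le \bar T$ when $A_3 > 0$ (if $A_3 = 0$ one may restrict a priori to $T \le T_0$ without changing the infimum). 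The priority condition $A_1 \gg A_i$ from the footnote after \eqref{of} guarantees that the endpoint penalty dominates whenever $F^0 \ne \varepsilon$, keeping $T_n$ bounded away from zero. The explicit formula \eqref{M_S-u-sol} bounds $M_{S,n}$ uniformly, and since the right-hand sides of \eqref{sys-ocp-a}--\eqref{sys-ocp-b} are decreasing in $M_S$, the absorbing set \eqref{d-set} bounds $M_n, F_n$ uniformly on $[0, \bar T]$.

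\emph{Extraction of a limit.} Extend each $u_n$ by zero and each trajectory by continuing the ODE to $[0, \bar T]$. By Bolzano--Weierstrass, $T_n \to T^* \in (0, \bar T]$ along a subsequence; by Banach--Alaoglu, a further subsequence gives $u_n \rightharpoonup u^*$ weakly-$\star$ in $L^\infty(0,\bar T)$ with $0 \le u^* \le u_{\max}$. Uniform bounds on the right-hand sides render $(M_n, F_n, M_{S,n})$ equi-Lipschitz, so Arzel\`a--Ascoli yields uniform convergence to some $(M^*, F^*, M_S^*)$. Weak-$\star$ convergence of $u_n$ suffices to pass to the limit in the integral form of \eqref{sys-ocp-c}, while uniform convergence of the states lets one pass to the limit in the nonlinear right-hand sides of \eqref{sys-ocp-a}--\eqref{sys-ocp-b}; the limit triple therefore solves \eqref{SITsys} driven by $u^*$. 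The endpoint condition follows from $F^*(T^*) = \lim_n F_n(T_n) = \varepsilon$ by combining uniform convergence of $F_n$ with $T_n \to T^*$.

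\emph{Lower semi-continuity and main obstacle.} The terminal and linear running terms converge to their counterparts at $(u^*, T^*)$ by uniform convergence and continuity. The quadratic term $\tfrac{1}{2}A_4 \int_0^{T_n} u_n^2\,dt$ is convex in $u$, hence weakly lower semi-continuous on $L^2(0,\bar T)$, giving $\int_0^{T^*}(u^*)^2\,dt \le \liminf_n \int_0^{T_n} u_n^2\,dt$. Summing the three contributions gives $\mathcal{J}(u^*, T^*) \le \liminf_n \mathcal{J}(u_n, T_n) = \mathcal{J}^*$, proving optimality. The principal technical difficulty is the \emph{free terminal time}: admissible controls live in function spaces whose domain depends on $T_n$, so one must embed everything in a common ambient space (here $L^\infty(0,\bar T)$ via zero extension, or equivalently by the time rescaling $s = t/T_n$ onto $[0,1]$) before extracting weak limits; the complementary subtlety is excluding the degenerate limit $T^* = 0$, which uses the dominance of $A_1$ over the other weights.
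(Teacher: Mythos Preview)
Your argument is essentially correct but follows a genuinely different route from the paper. The paper does not carry out the direct method at all: it simply invokes the Filippov--Cesari existence theorem (citing Fleming--Rishel and Seierstad--Sydsaeter) and verifies the standard checklist of hypotheses --- nonempty bounded trajectory set, closed bounded initial/terminal sets, compact convex control set, affine dependence of the dynamics on $u$, continuity of the terminal cost, and convexity plus a coercivity bound $A_2 F + A_3 + \tfrac12 A_4 u^2 \ge \mathcal{K}_1|u|^\alpha - \mathcal{K}_2$ on the running cost. Your proof instead builds the minimizer by hand: minimizing sequence, weak-$\star$ compactness in $L^\infty$, Arzel\`a--Ascoli for the states, and weak lower semicontinuity of the convex quadratic term.

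What each approach buys: the paper's route is short and delegates all analytic work to a black-box theorem, but it glosses over the free-terminal-time aspect (Filippov--Cesari is usually stated for fixed or a priori bounded horizons, and the paper never explains why $T^*$ stays in a compact subinterval of $(0,\infty)$). Your approach is self-contained and makes that issue explicit, which is a genuine advantage. Two of your steps remain somewhat informal, however. First, the claim that when $A_3=0$ one may ``restrict a priori to $T\le T_0$ without changing the infimum'' is not justified: after $T_0$ with $u\equiv 0$ the wild population rebounds, so truncation at $T_0$ is not automatically harmless; a cleaner bound on $T_n$ in that case would use $A_2>0$ together with a lower bound on $\int_0^{T} F\,dt$ along near-minimizers. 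Second, excluding $T^*=0$ by appealing to the footnote ``$A_1\gg A_i$'' treats an informal modeling remark as a hypothesis; the rigorous statement you need is that $A_1(F^0-\varepsilon)^2$ strictly exceeds the cost of some admissible pair, which is exactly what ``$A_1$ large enough'' encodes but is not part of the theorem's formal assumptions. These are refinements rather than structural gaps, and in fact your treatment of the free horizon is more careful than the paper's.
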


\begin{proof}
To prove this theorem, we employ the classical approach based on the Filippov-Cesari existence theorem (thoroughly described in \cite{Fleming1975,Seierstad1987}). Roughly speaking, we have to show that our optimal control problems \eqref{ocp} fulfils the following \emph{sufficient} conditions for existence of an optimal solution:

\begin{enumerate}
\item[(i)]
Solution $\mathbf{X}(t):= \big( M(t), F(t), M_S(t) \big)^{'}$ of the dynamical system \eqref{sys-ocp-a}-\eqref{sys-ocp-c} is well-defined and unique for each admissible $u(t) \in [0,u_{max}], t \geq 0 $, while the set of all $\mathbf{X}(t)$ is non-empty and bounded  $\forall \; u(t) \in [0,u_{max}]$ and $ \forall \; t \geq 0$.
\item[(ii)]
The sets of all initial and terminal states $\mathbf{X}(0)=\big( M(0),F(0),M_S(0) \big)$, $\mathbf{X}(T)=\big( M(T^{*}),F(T^{*}),M_S(T^{*}) \big)$ are closed and bounded in $\mathbb{R}^3_+$.
\item[(iii)]
The control set $[0,u_{\max}]$ is closed, bounded and convex in $\mathbb{R}$.
\item[(iv)]
The right-hand sides of the dynamical system \eqref{sys-ocp-a}-\eqref{sys-ocp-c} are linear in the control variable $u$.
\item[(v)]
The terminal-state function $ \phi(\mathbf{X}) :=A_1 (F - \varepsilon)^2$ is continuous in its arguments.
\item[(vi)]
The integrand of \eqref{of-ocp} is convex in $u$ and satisfies the coercivity condition $A_2 F(t) + A_3 + \dfrac{1}{2} A_4 u_{1}^{2}(t)  \geq \mathcal{K}_1 \big\| u \big\|^{\alpha} - \mathcal{K}_2$ with some constants $\mathcal{K}_1 >0, \ \alpha > 1,$ and $\mathcal{K}_2$.
\end{enumerate}	

Items (i)-(ii) have been already corroborated in Section~\ref{sec-model} where it was established that dynamical system \eqref{sys-ocp-a}-\eqref{sys-ocp-c} has a unique solution for any $u(t) \in [0,u_{\max}]$ that exists for all $t \geq 0$. Moreover, the trajectories $M(t), F(t), M_S(t)$ remain bounded for all $t \geq 0$ since they belong to the compact absorbing set
\[ \mathcal{Z} := \left\{ \big( M(t), F(t) \big) \in \mathcal{D}, 0 \leq M_S(t) \leq \dfrac{u_{\max}}{\mu_S}, \ \forall \: t \geq 0  \right\} \]
if engendered by $\big( M(0), F(0), M_S(0) \big) \in \mathcal{Z}$. In the above definition of $\mathcal{Z}$, $\mathcal{D}$ refers to \eqref{d-set}.

The credibility of items (iii)-(v) is beyond doubt, while item (vi) is fulfilled with $\mathcal{K}_1=\dfrac{1}{2} A_4, \alpha=2$ and $\mathcal{K}_2=0$ for all $t \geq 0$. This completes the proof.
\end{proof}

Formal solution of the optimal control problem \eqref{ocp} can be achieved by applying the Pontryagin maximum principle that constitutes the necessary condition for optimality of $\big( u^{*}, T^{*} \big),$ which exists in virtue of Theorem~\ref{th3}. In particular, we are interested in the variant of maximum principle applicable to optimal control problems with free terminal time, concisely described by Lenhart
and Workman \cite{Lenhart2007}. The Hamiltonian function $H(M,F,M_S,u,\lambda_1,\lambda_2,\lambda_3): \: \mathcal{Z} \times \mathbb{R}^3 \mapsto \mathbb{R}$ associated with the optimal control problem \eqref{ocp} is defined as
\begin{align}
\label{ham}
H(\mathbf{X}, u, \boldsymbol{\lambda}) &=  - A_2 F -  A_3 - \frac{1}{2} A_4 u^2 + \lambda_1 \left[ r\rho\dfrac{FM}{M+\gamma M_S} \mye{-\beta(M+F)}-\mu_M M \right] \notag \\
&+ \lambda_2 \left[ (1-r) \rho\dfrac{FM}{M+\gamma M_S} \mye{-\beta(M+F)}-\mu_M M \right] + \lambda_3 \Big[ u(t)-\mu_S M_S \Big]
\end{align}
where $\boldsymbol{\lambda}:= \big( \lambda_{1}, \lambda_{2}, \lambda_3 \big)^{'}$ can be viewed as a vector of Lagrange multipliers linked to differential constraints \eqref{sys-ocp-a}-\eqref{sys-ocp-c}. On the other hand, $\lambda_{i}=\lambda_{i}(t), i=1,2,3$ are time-varying real \emph{adjoint} functions that express the marginal variations in the value of objective functional $\mathcal{J}(u,T)$ induced by changes in the current values of state variables $\big( M(t), F(t), M_S(t) \big)$ (in the ``component-by-component'' sense). Thus, the current values of $\lambda_{i}=\lambda_{i}(t), i=1,2,3$ reflect additional benefits or costs associated with changes in $\big( M(t), F(t), M_S(t) \big)$ and they are necessary elements of the Pontryagin maximum principle \cite{Lenhart2007}, which is formulated as follows.

Let $\big( u^{*}, T^{*} \big)$ be an optimal pair for \eqref{ocp} in the sense that $u^{*}(t)$ is a piecewise continuous real function with domain $[0, T^{*}]$ and range $[0, u_{\max}]$ and
\[ \mathcal{J}\big( u^{*}, T^{*} \big) \leq  \mathcal{J}(u,T) \]
for all other controls $u$ and times $T$. Let $\mathbf{X}^{*}(t)= \big( M^{*}(t), F^{*}(t), M_S^{*}(t) \big)^{'}$ be the corresponding state defined for all $t \in [0, T^{*}]$. Then there exists a piecewise differentiable adjoint function $\boldsymbol{\lambda}: [0, T^{*}] \mapsto \mathbb{R}^3$  satisfying the \emph{adjoint} differential equation
\begin{equation}
\label{adj-sys}
\dot{\boldsymbol{\lambda}} = - H_{\mathbf{X}} \big(\mathbf{X}^{*}, u^{*}, \boldsymbol{\lambda} \big)
\end{equation}
with transversality conditions
\begin{equation}
\label{t-con}
 \lambda_{1} (T^{*})=0, \quad \lambda_{2} (T^{*})= - \phi_{\mathbf{X}}\Big|_{\mathbf{X}=\mathbf{X}(T^{*})} = -2A_1 \big( F^{*}(T^{*}) - \varepsilon \big), \quad \lambda_{3} (T^{*})=0
\end{equation}
while $0 < T^{*} < \infty$ fulfills the time-optimality condition
\begin{equation}
\label{min-t}
 H \big(\mathbf{X}^{*}(T^{*}), u^{*}(T^{*}), \boldsymbol{\lambda}(T^{*}) \big) =0.
\end{equation}
Furthermore, the Hamiltonian \eqref{ham} has a critical point (maximum\footnote{It is straightforward to verify that $H_{uu} = -A_4 < 0$ for all $u$. }) at $u = u^{*}$, and therefore
\begin{equation}
\label{pmax}
H \big(\mathbf{X}^{*}(t), u^{*}(t), \boldsymbol{\lambda}(t) \big) \geq  H \big(\mathbf{X}^{*}(t), u(t), \boldsymbol{\lambda}(t) \big)
\end{equation}
for any $u(t): [0,T^{*}] \mapsto [0, u_{\max}]$ and almost for all $t \in [0,T^{*}]$.

According to \cite{Lenhart2007}, condition \eqref{pmax} can be written more precisely as
\begin{equation}
\label{optcon}
\left.  \begin{array}{ccc}
u^{*}(t)=0 & \text{if} & H_u < 0\\
0 < u^{*}(t)< u_{\max} & \text{if} & H_u=0 \\
u^{*}(t)=u_{\max} & \text{if} & H_u > 0
\end{array} \right\}
\end{equation}
that leads to a more compact form of $u^{*}(t)$ also known as the \emph{characterization} of the optimal control:
\begin{equation}
\label{char-u}
u^{*}(t) = \max \left\{ 0, \min \left\{ \frac{1}{A_4} \cdot \lambda_3 (t), u_{\max} \right\} \right\}.
\end{equation}

It is worth pointing out that the optimality conditions \eqref{optcon} are rather meaningful and provide interesting insights regarding the benefits and costs of control interventions. Namely, the condition
\[ H_u = - A_4 u + \lambda_3 = 0 \]
implies that, under optimal release program $u^{*}$ and at each $t$, the marginal cost of control intervention (given by the term $A_4 u$) must be equal to its marginal benefit (expressed by the term $\lambda_3$). If the marginal cost of $u^{*}$ is higher than its marginal
benefit (that is, $H_u < 0$ in \eqref{optcon}) then it is optimal not to implement this program at all, and therefore we set $u^{*}(t)=0$.
On the other hand, if the marginal cost of $u^{*}$  is lower than its marginal benefit (that is, $H_u > 0$ in \eqref{optcon}), then it is optimal to use all available resources, and therefore we set $u^{*}(t)=u_{\max}.$

Using the characterization of $u^{*}$ given by \eqref{char-u}, the original optimal control problem \eqref{ocp} can be reduced to a
two-point boundary value problem. This boundary value problem (also known as  \emph{optimality system}) is composed by six differential equations with six endpoint conditions, namely:
\begin{itemize}
\item
three original equations with underlying initial conditions  \eqref{sys-ocp-a}-\eqref{sys-ocp-c} where $u(t)$ is replaced by its characterization \eqref{char-u};
\item
three adjoint equations \eqref{adj-sys} with transversality conditions \eqref{t-con} where $u(t)$ is also replaced by its characterization \eqref{char-u}.
\end{itemize}
The optimal time $0 < T^{*} < \infty$ is defined by the time-optimality condition \eqref{min-t}.

Due to non-linearity, complexity, and high dimension of the optimality system described above, it can only be solved numerically, and the forthcoming section is fully devoted to this issue, with a particular focus on \textit{Aedes spp} parameters, given in Table \ref{tab1}.

\section{Numerical simulations and discussion}
\label{sec-num}

\subsection{Preliminaries}

Our numerical simulations are focused on the "worst scenario'', that is, supposing that wild mosquitoes have equilibrium densities at the commencement of control intervention ($t=0$), while sterile males are not present in the target locality ($M_S(0)=0$). Bliman \emph{et al} \cite{Bliman2019} had deduced that the positive steady state $E^{\sharp}=\big(M^{\sharp},F^{\sharp} \big)$ of the sex-structured system \eqref{SITsys} (or \eqref{SIT-reduced}) with $M_S(t) \equiv 0$ has coordinates
\begin{equation}
\label{steady}
M^{\sharp} := \frac{\cN_M}{\cN_F + \cN_M} \frac{1}{\beta} \ln \cN_F \quad \text{and} \quad F^{\sharp} := \frac{\cN_F}{\cN_F + \cN_M} \frac{1}{\beta} \ln \cN_F,
\end{equation}
where the basic offspring numbers $\cN_F, \: \cN_M$ satisfy the relationships \eqref{bon}. Thus, initial conditions for the dynamical system \eqref{sys-ocp-a}-\eqref{sys-ocp-c} are defined as
\[ M(0)=M^{\sharp}, \quad F(0)=F^{\sharp}, \quad M_S(0)=0. \]

For the parameter values given in the last column of Table~\ref{tab1}, we have that $\cN_F \approx 75.83, \: \cN_M \approx 56.87$, while the mosquito densities at equilibrium $E^{\sharp}=\big(M^{\sharp},F^{\sharp} \big)$ are $M^{\sharp} \approx 5.19 \times 10^3$ and $M^{\sharp} \approx 6.93 \times 10^3$ individuals per hectare (indv/ha).

We assume that elimination of wild population is reached when the endpoint condition \eqref{endpoint} is fulfilled with $\varepsilon =10^{-1}$ meaning that no female mosquitoes are left in the target locality ($F^{*}(T^{*}) < 1$).

The value of $u_{\max}$ must be set greater than $\Lambda_{\text{crit}} \approx 1.29 \times 10^3$ (see Theorems~\ref{th1} and ~\ref{th2} in Section~\ref{sec-model}). On the one hand, $u_{\max}$ should not be too high in order to avoid extra-large releases and to maintain feasible  the overall cost of the control intervention. On the other hand, $u_{\max}$ must be large enough in order to potentiate a faster elimination of wild populations. Therefore, we set
\[ u_{\max}=2.5 \times 10^3 \]
as the maximum release capacity in all ensuing simulations. This choice is also linked to the production capacity of the sterile insect factory, and the proper value of $u_{\max}$ must be defined by practitioners.

The coefficients $A_i, i =1,2,3,4$ in \eqref{of-ocp} must account for scaling of all quantities involved in the objective functional. In this study, we adopt the following scaling for $A_i, i=1,2,3,4$:
\begin{equation}
\label{coef}
A_1= \frac{\mathcal{P}_1}{F^{\sharp}}, \quad A_2 := \frac{\mathcal{P}_2}{F^{\sharp}}, \quad A_3= \frac{\mathcal{P}_3}{365}, \quad A_4=\frac{\mathcal{P}_4}{u_{\max}},
\end{equation}
where $F^{\sharp}$ refers to the steady-state value of wild females \eqref{steady}, time is scaled to years ($=365$ days), and the control effort is normalized by the maximal release capacity $u_{\max}$. In the gauged values \eqref{coef}, the coefficients $\mathcal{P}_i, i=1,2,3,4$ define ``pure'' priorities of the decision-making, which are dimensionless.

Our top priority is to reach elimination of wild mosquitoes in minimum time; therefore, the largest value is assigned to $\mathcal{P}_1$ thus seeking to fulfill the endpoint condition \eqref{endpoint} and to obtain that $A_1 \phi \big( \mathbf{X}^{*}(T^{*}) \big) = 0$. On the other hand,  the lowest priority  can be assigned to $\mathcal{P}_4$ that stands for the rearing (unit) cost of one sterile males.

The intensity of wild mosquitoes elimination (expressed by $\mathcal{P}_2$ in the objective functional \eqref{of-ocp}) is also essential, and a higher value of $\mathcal{P}_2$ must help to continue elimination even when the current value of $F(t)$ becomes relatively small. The role of time minimization ($\mathcal{P}_3$) should be further explored by considering three options: (1) $\mathcal{P}_3=0$, (2) $\mathcal{P}_4 < \mathcal{P}_3 < \mathcal{P}_2, $ and (3) $\mathcal{P}_3 > \mathcal{P}_2$.

Summarizing the above-mentioned considerations, we consider the following values of ``pure'' priority coefficients:
\begin{equation}
\label{pure-p}
\mathcal{P}_1 = 10^{10}, \quad  \mathcal{P}_2= 10^4, \quad \mathcal{P}_3 \in \big\{ 0, 10^3, 10^5 \big\}, \quad \mathcal{P}_4=1.
\end{equation}

\subsection{Continuous-time optimal release programs}
\label{subsec-con}

To solve numerically the optimality systems described in the closing part of Section~\ref{sec-oc} together with time optimality condition \eqref{min-t} for free terminal time, we employ the GPOPS-II solver\footnote{For more information regarding the GPOPS-II solver please visit http://gpops2.com/. A concise description of this solver and its capacities is also available in \cite[Appendix B]{Campo2018}.} designed for the MATLAB platform. This solver implements an adaptive numerical technique based on the \emph{direct orthogonal collocation}, which is also known as \emph{Radau pseudospectral method} \cite{Garg2011,Patterson2014}.

Additionally, GPOPS-II solver automatically scales all input intervals $[0,T]$ to the range $[-1,1]$, thus it suffices to hold its standard numerical tolerance of $10^{-5}$ with regards to internal scaling.

\begin{figure}[t!]
\begin{center}
\begin{tabular}{ccc}
& Optimal control profiles & \\
& & \\
\includegraphics[width=.33\textwidth]{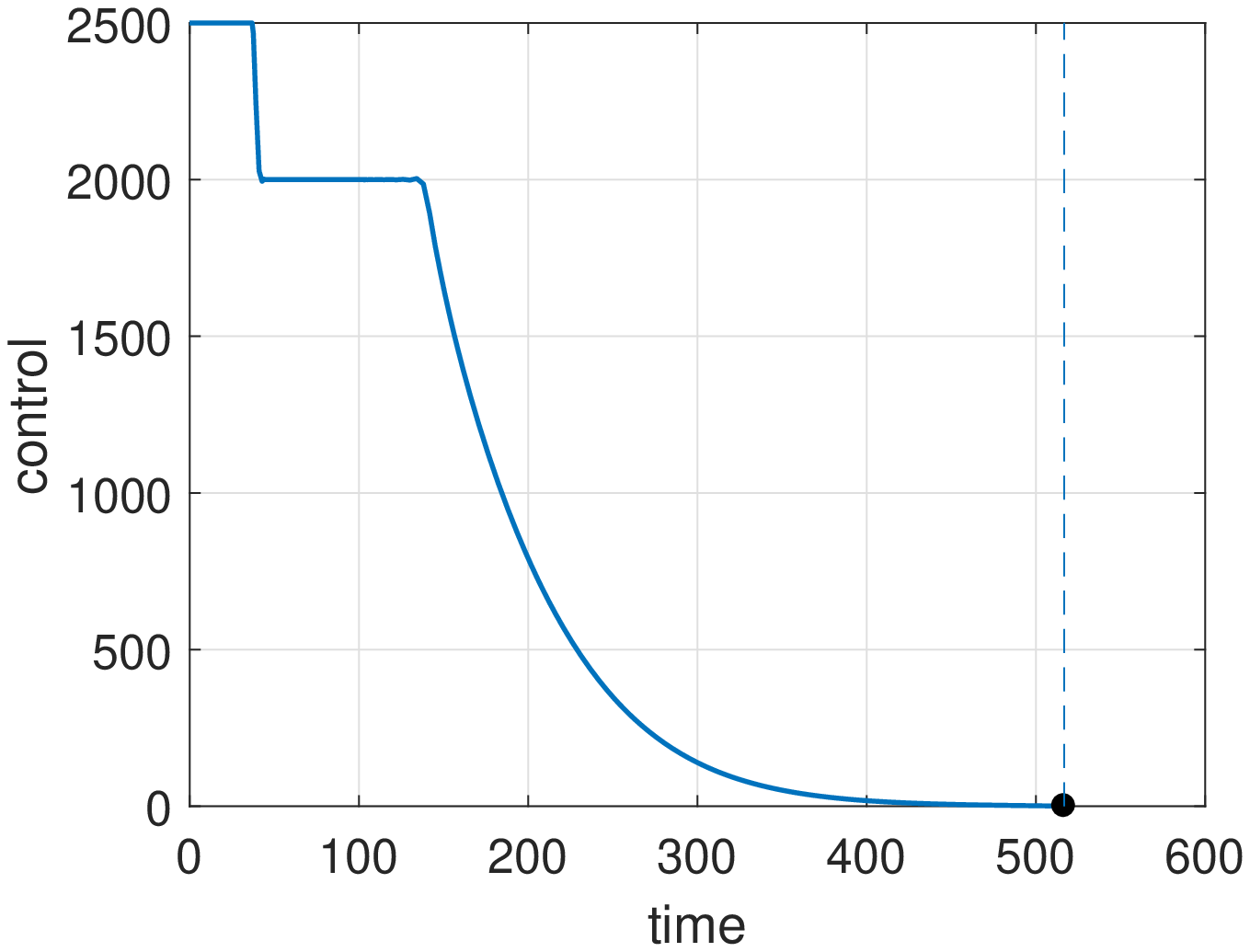} &
\includegraphics[width=.33\textwidth]{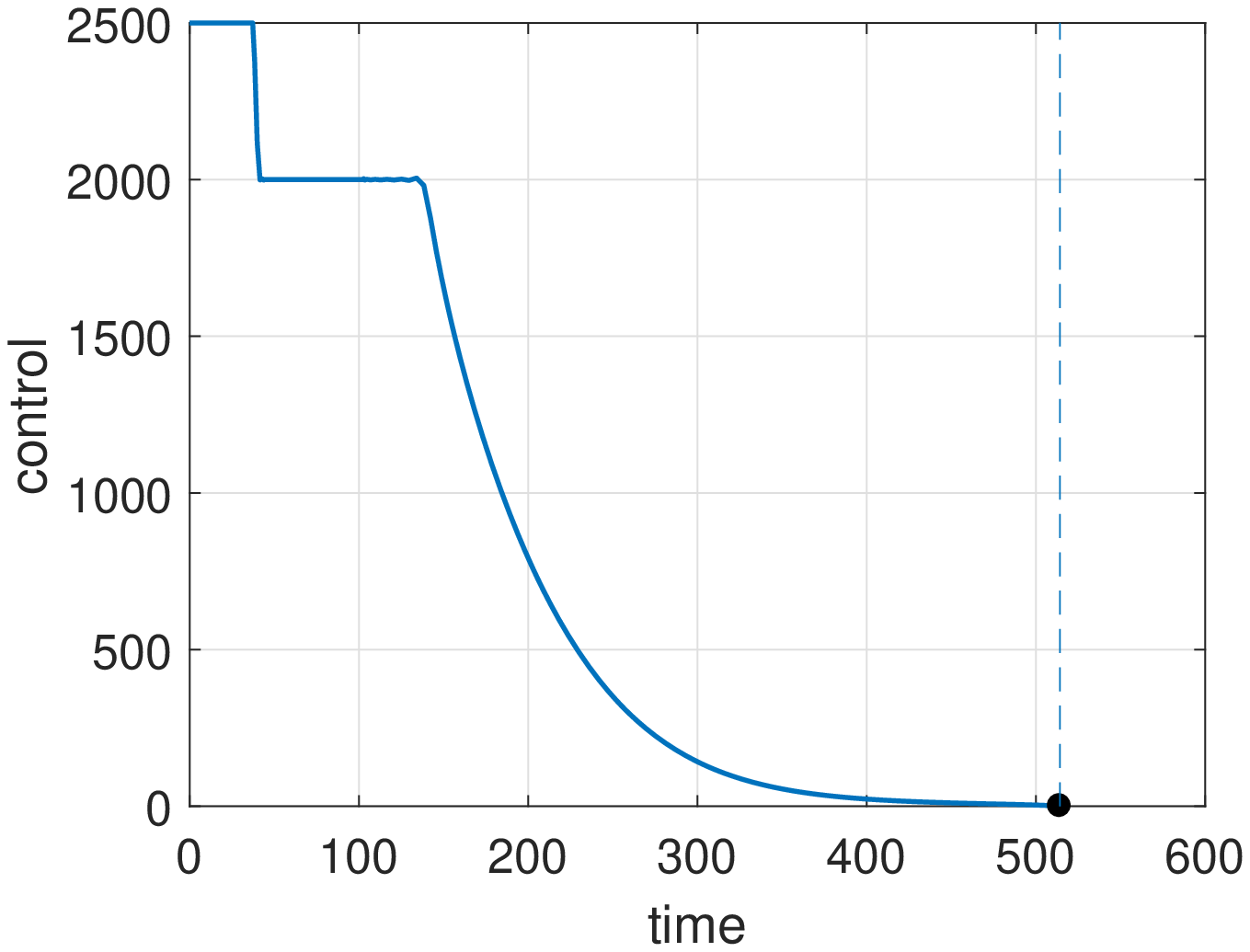} &  \includegraphics[width=.33\textwidth]{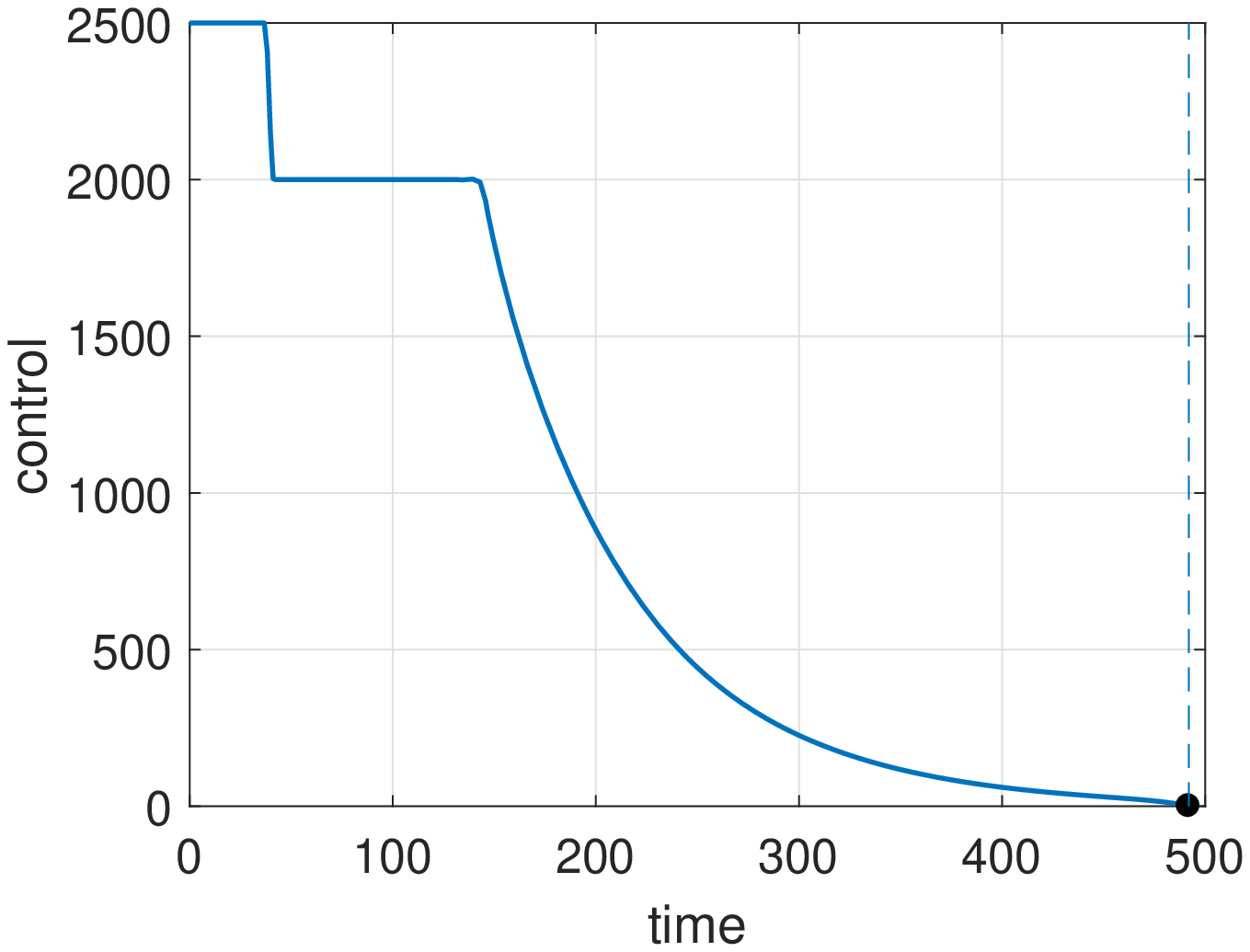} \\
& & \\
& Trajectories of optimal states & \\
& & \\
\includegraphics[width=.33\textwidth]{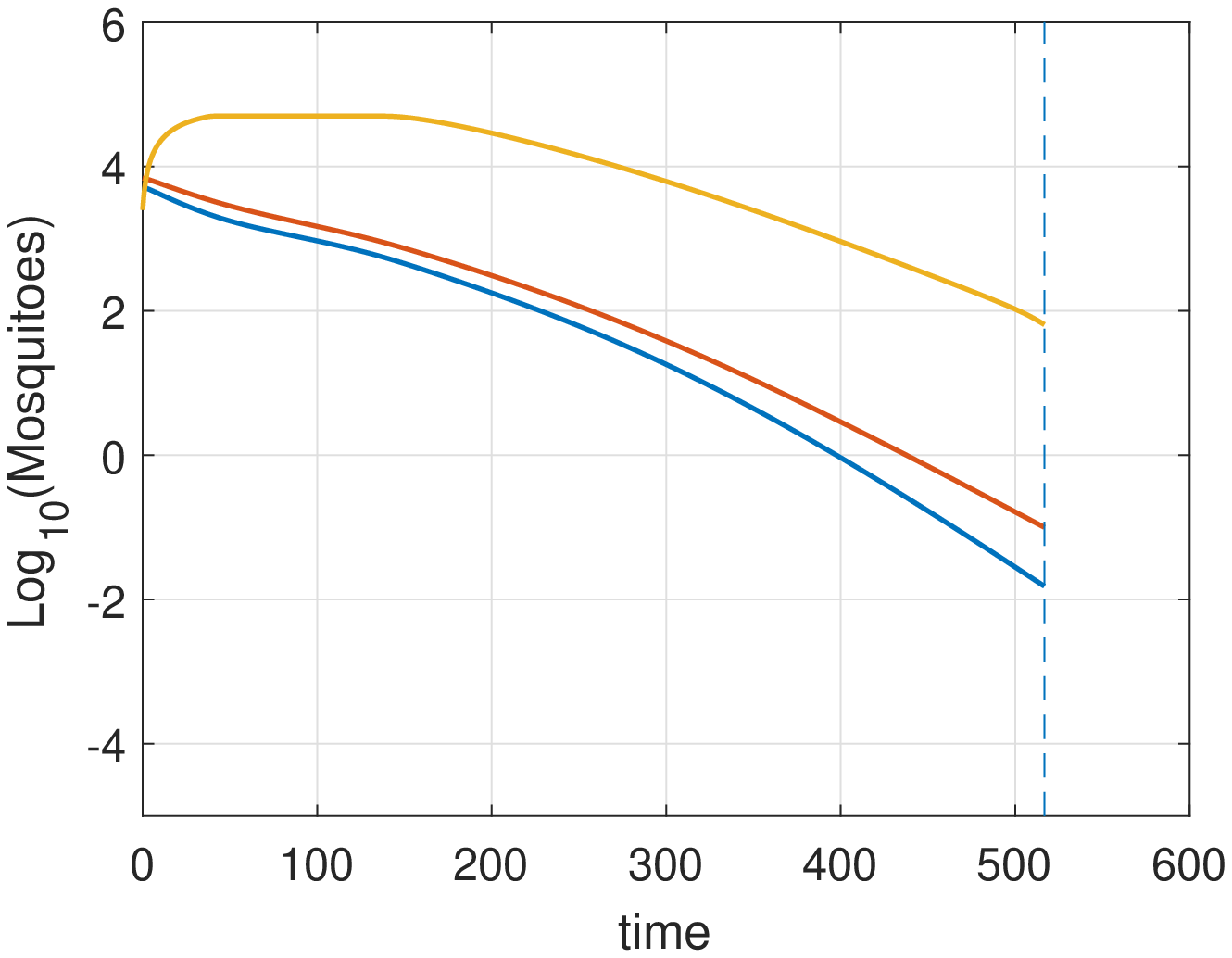} &  \includegraphics[width=.33\textwidth]{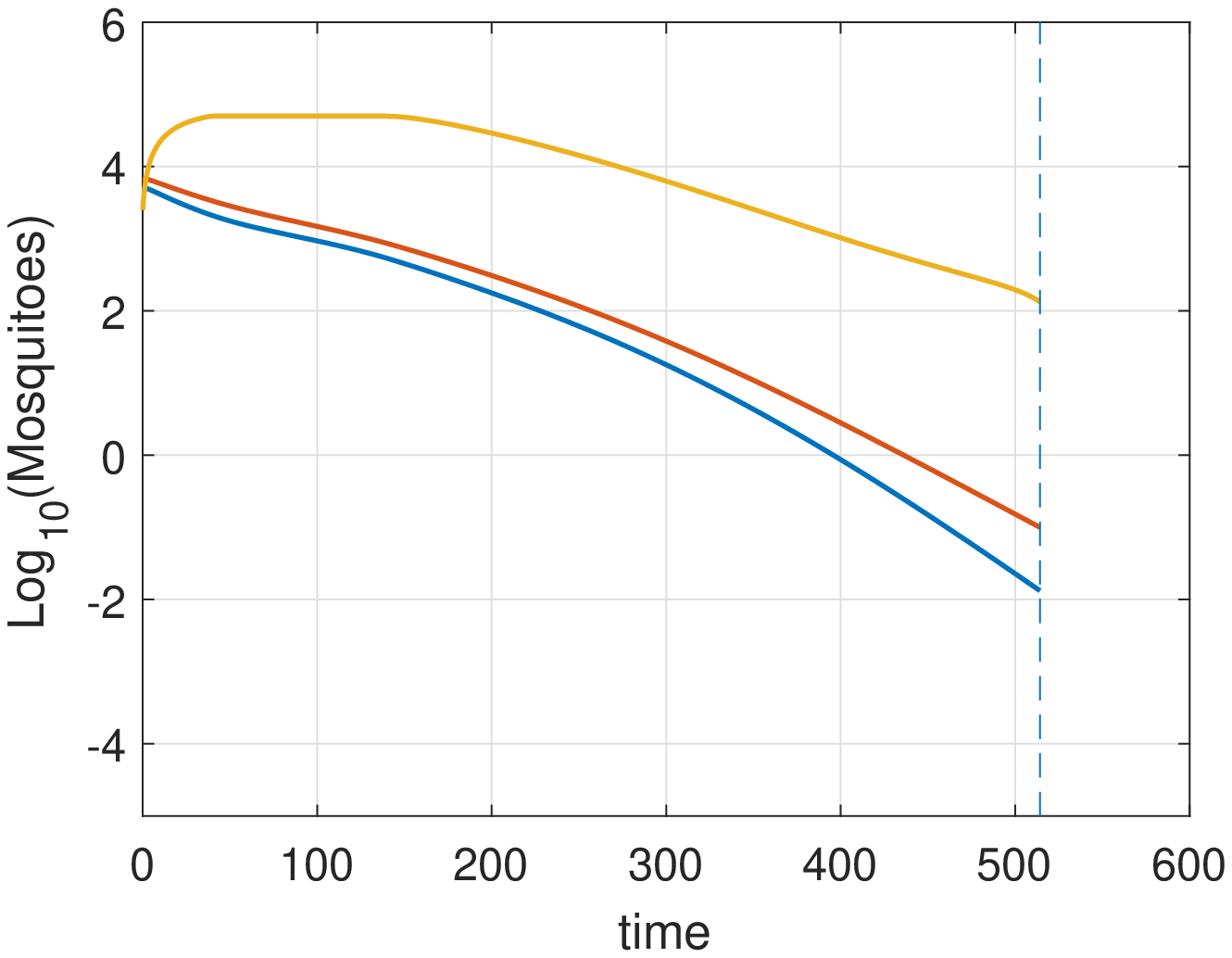}  & \includegraphics[width=.33\textwidth]{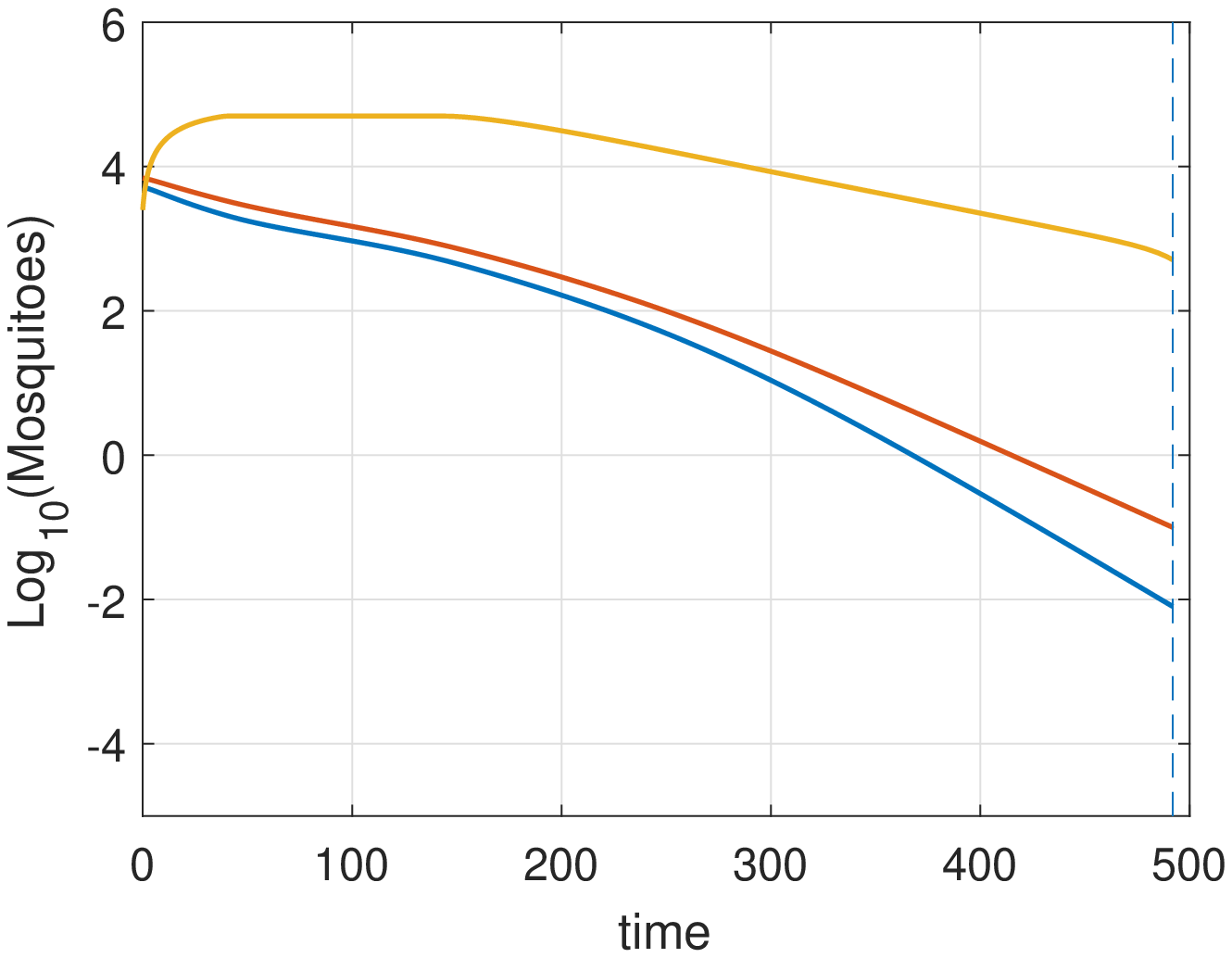} \\
 $\mathcal{P}_3=0, \; T^{*}=517$ days;	 &  $\mathcal{P}_3=10^3, \;  T^{*}=515$ days & $\mathcal{P}_3=10^5, \; T^{*}=492$ days \\
$\text{CNSM}(u^{*})= 421,640$ indv/ha & $\text{CNSM}(u^{*})= 422,730$ indv/ha & $\text{CNSM}(u^{*})= 446,200 $  indv/ha
\end{tabular}
\end{center}
\caption{\emph{Upper row:} Optimal release programs $u^{*}(t), \: t \in [0,T^{*}]$ for $\mathcal{P}_3 \in \big\{ 0, 10^3, 10^5 \big\}$. \emph{Lower row:} Time evolution of mosquito populations under respective $u^{*}(t)$: wild males $\log_{10} M(t)$ (blue-colored curves), wild females $\log_{10} F(t)$ (red-colored curves), and sterile males $\log_{10} M_S(t)$ (yellow-colored curves).  }
\label{fig1}
\end{figure}

Figure~\ref{fig1} presents the results of numerical solutions of the optimal control problem \eqref{ocp} for three sets of ``pure'' priority values assigned in \eqref{pure-p}. The upper row of Figure~\ref{fig1} displays very similar structure of the optimal release programs $u^{*}(t)$ for all three cases, what naturally shows off their robustness. However, as $\mathcal{P}_3$ increases, the overall number of sterile insects needed for the program's implementation grows larger, while the overall time $T^{*}$ of the control intervention slowly shortens down. This provides interesting insights for decision-makers by  illustrating the existence of a certain tradeoff between the overall duration of the elimination campaign and its underlying costs.

Additionally, it is worthwhile to note that optimal release programs are suspended exactly at $t=T^{*}$, that is, when the endpoint condition \eqref{endpoint} is finally met. The latter is induced by the transversality conditions \eqref{t-con}, according to which no additional benefit is expected after $T^{*}$.

Even though all three optimal release programs $u^{*}(t), \: t \in [0,T^{*}]$ displayed in Figure~\ref{fig1} have a very clear and straightforward structure, their tangible implementation may become unfeasible since the continuous-time releases are hardly practicable.

In the following subsection, we propose a more realistic type of release programs which are inspired by the optimal solutions $u^{*}(t)$ and, due to that reason, they are further referred to as ``suboptimal'' strategies based on impulsive releases of sterile males.

\subsection{Practical applications: suboptimal impulsive release programs}
\label{subsec-imp}

From the practical standpoint, it is more realistic to consider periodic releases of sterile male mosquitoes every $n \tau$ days, where $n \in \mathbb{N}$ until meeting the endpoint condition \eqref{endpoint}. In fact, Bliman \emph{et al} \cite{Bliman2019} studied an extension of SIT-control model \eqref{SITsys}, namely
\begin{subequations}
\label{SITsys-imp}
\begin{align}[left = \empheqlbrace\,]
\label{SITsys-imp-a}
\dot M &=r\rho\dfrac{FM}{M+\gamma M_S} \mye{-\beta(M+F)}-\mu_M M, & & \hspace{-1mm} M(0) =M^0 > 0, \\
\label{SITsys-imp-b}
\dot F &=(1-r)\rho\dfrac{FM}{M+\gamma M_S} \mye{-\beta(M+F)}-\mu_F F, & & F(0) =F^0 > 0, \\
\label{SITsys-imp-c}
\dot M_S &= -\mu_S M_S, & & \forall \: t \in \bigcup_{n \in \mathbb{N}} \big( n \tau, (n+1)\tau \big), \\
\label{SITsys-imp-d}
M_S \big( n\tau^{+} \big) &= \tau \Lambda_n + M_S \big( n\tau^{-} \big), & & n \in \mathbb{N},
\end{align}
\end{subequations}
where the piecewise continuous function $M_S(t)$ has jumps at $t=n \tau$, $M_S \big( n\tau^{+} \big)$ denote its right and left limits at $t=n \tau$, and $\Lambda_n$ stands for a ``per day'' number of sterile male mosquitoes to be released in the target locality. The relationship \eqref{SITsys-imp-d} models that at $t=n \tau$, a lump quantity  $\tau \Lambda_n$ of sterile insects is emitted to the wild population at once, whereas on a union of open intervals $\big(n \tau, (n + 1) \tau \big), n \in \mathbb{N}$ the mosquito populations $M(t), F(t),$ and $M_S(t)$ evolve according to Eqs. \eqref{SITsys-imp-a}-\eqref{SITsys-imp-c}.

It is worth pointing out that Bliman \emph{et al} \cite{Bliman2019} have proposed and studied different strategies for implementation of impulsive releases of sterile males that are engendered by the choice of $\Lambda_n$ in \eqref{SITsys-imp-d}. Namely, the \emph{open-loop} (or \emph{feedforward}) SIT-control strategies for impulsive releases have been designed by assuming $\Lambda_n=\Lambda$ constant for all $n \in \mathbb{N}$. Alternatively, choosing each $\Lambda_n$ in accordance with periodic measurements (either synchronized with releases or more sparse) of the wild population size has enabled the design of the \emph{closed-loop} (or \emph{feedback}) periodic impulsive strategies for SIT-control. Moreover, the combination of two above-mentioned approaches has resulted in the design of mixed (open/closed-loop) strategies for periodic impulsive SIT-control, and this control mode have actually rendered the best outcome expressed through a shorter overall time needed to reach elimination, smaller cumulative number of sterile insects to be released, and fewer number of releases to be effectively carried out during a whole SIT-control campaign.

On the other hand, mixed (open/closed-loop) strategies for periodic impulsive SIT-control still require for real-time assessments of wild population sizes $M(t)$ and $F(t)$, and this protocol is time-consuming and costly, even with sparse measurements carried out every $p\tau$ days where $p=2, 3, \ldots$. When $p=1$, it is understood that the measurements are synchronized with the releases (the reader may find further details regarding the synchronized and sparse measurements in \cite{Bliman2019}).

In this paper, we propose another type of periodic impulsive SIT-control strategies that are expressly feedforward (or open-loop) and thus require no real-time assessments of the wild population sizes. Their design is performed on the grounds of continuous-time numerical solutions $u^{*}(t), t \in [0, T^{*}]$ to the optimal control problem \eqref{ocp} (obtained in Subsection~\ref{subsec-con}), and for this reason we call them ``suboptimal''. Contrary to the continuous case (Section ~\ref{sec-oc}) and the one considered by Bliman \emph{et al} \cite{Bliman2019}, we cannot rigorously prove the existence of an optimal solution,  neither to show the convergence of $F(t)$ to $\epsilon$ in a finite time. Nevertheless, we explore this case numerically.

First, we define the extension of the optimal control program $u^{*}(t)$ as
\begin{equation}
\label{uhat}
\hat{u}^{*}(t) = \left\{ \begin{array}{rcl}
u^{*}(t), & \text{if} & t \in [0, T^{*}] \\ 0, & \text{if} & t > T^{*} \end{array} \right.
\end{equation}
For a chosen period $\tau$, we then construct a sequence of impulses $\{ U_n \}$ using the following rule
\begin{equation}
\label{Un}
U_n = \max_{t \in \big[ n \tau, (n+1)\tau \big] } \hat{u}^{*}(t), \quad n \in \mathbb{N}.
\end{equation}
This sequence $\{ U_n \}$ converges to zero and reaches this value in some finite time $\hat{T}^{*} < \infty$. In view of the relationship \eqref{Un}, it is expected that the overall number of released sterile insects during the whole SIT-control campaign be a bit higher under suboptimal impulsive releases than under continuous-time optimal release programs $u^{*}(t)$. On the other hand, the endpoint condition \eqref{endpoint} may be reached a bit sooner under suboptimal impulsive releases than under continuous-time optimal release programs $u^{*}(t)$, and therefore $\hat{T}^{*} \leq T^{*}$. This is exactly what numerical experiments illustrate --- see results displayed in Figures~\ref{fig2} and~\ref{fig3} for $\tau=7$ days and $\tau=14$ days, respectively.

\begin{figure}[t!]
\begin{center}
\begin{tabular}{ccc}
& suboptimal impulsive & \\
& release strategies & \\
\includegraphics[width=.33\textwidth]{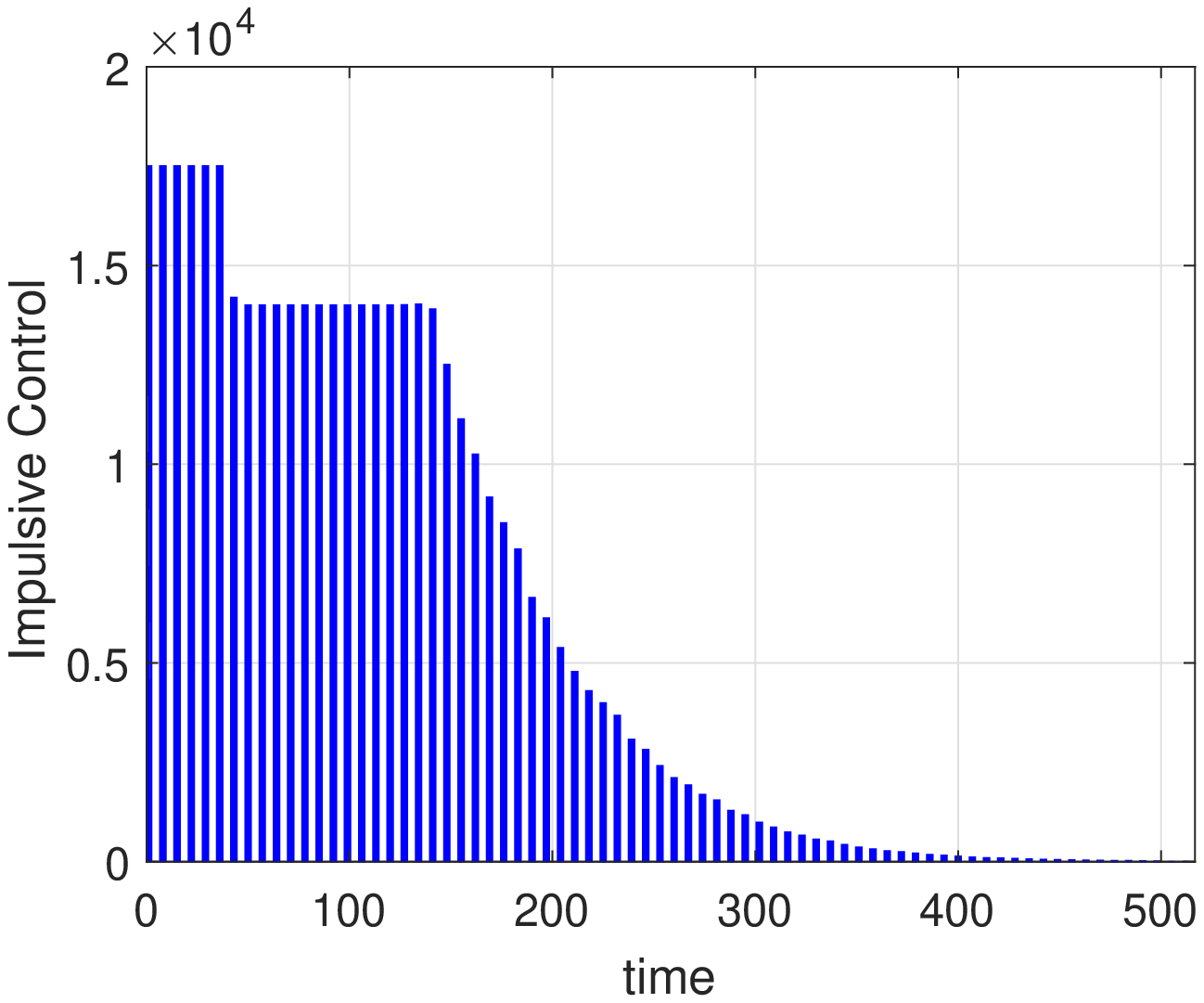} &
\includegraphics[width=.33\textwidth]{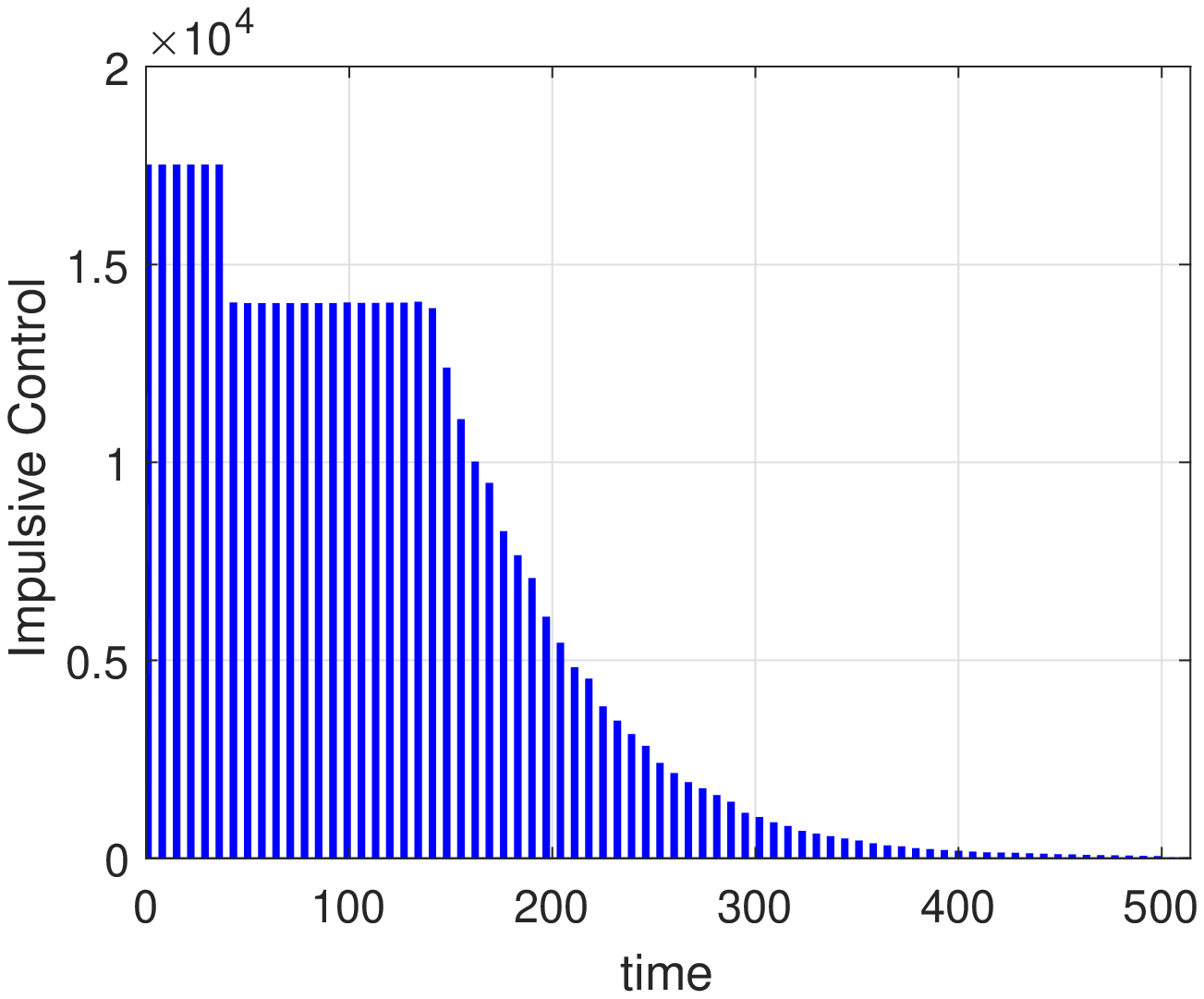} &  \includegraphics[width=.33\textwidth]{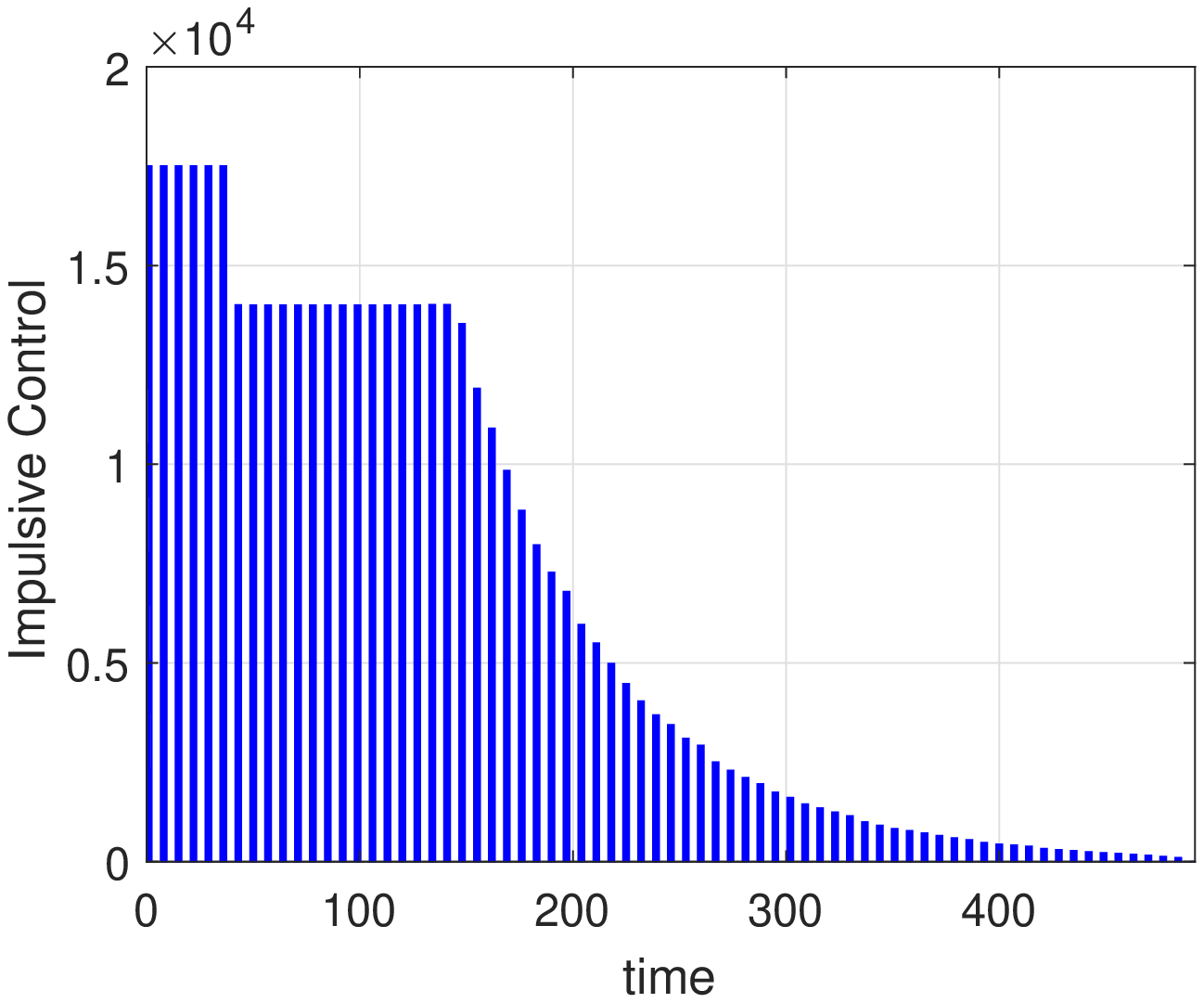} \\
& & \\
& Trajectories of suboptimal states & \\
& & \\
\includegraphics[width=.33\textwidth]{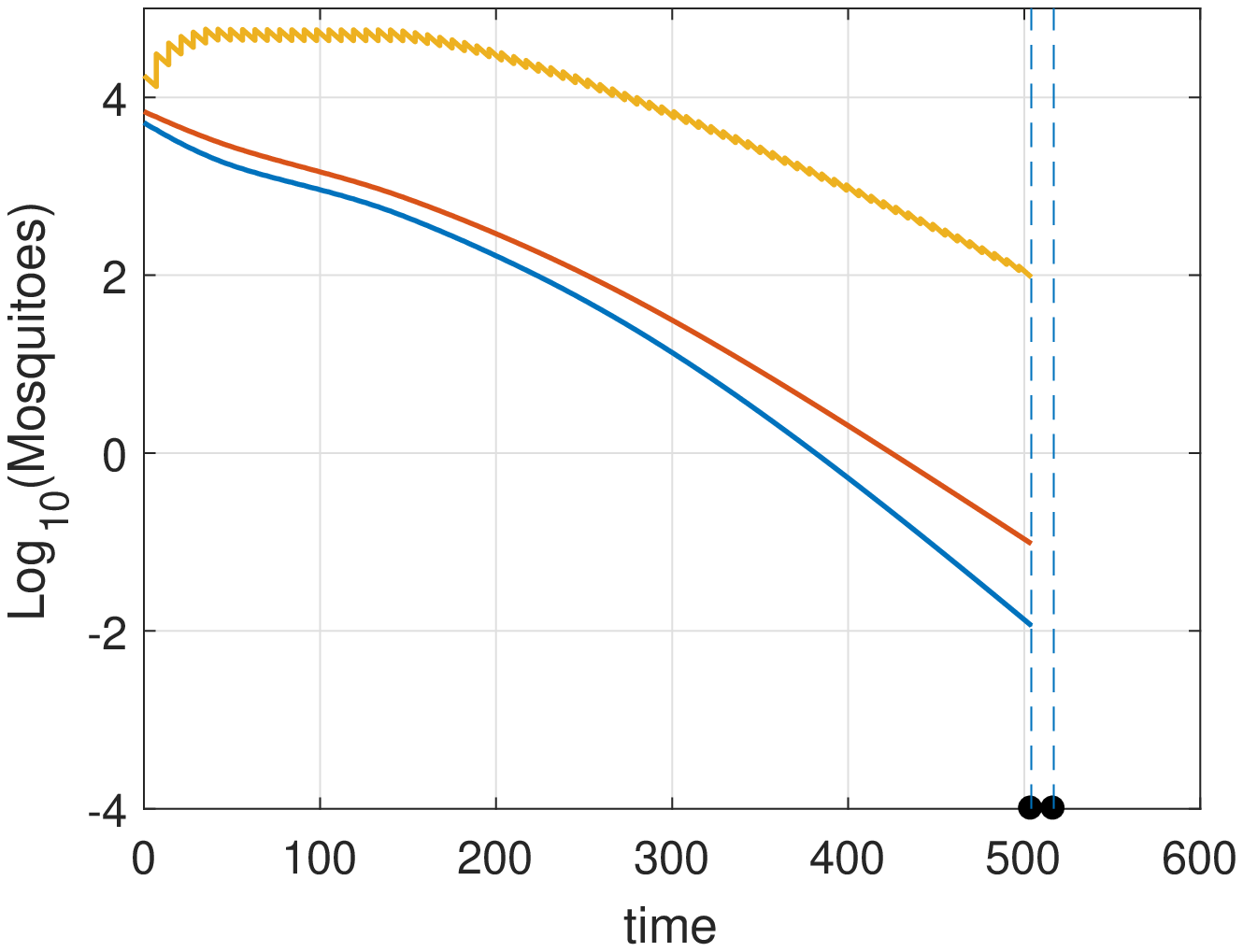} &  \includegraphics[width=.33\textwidth]{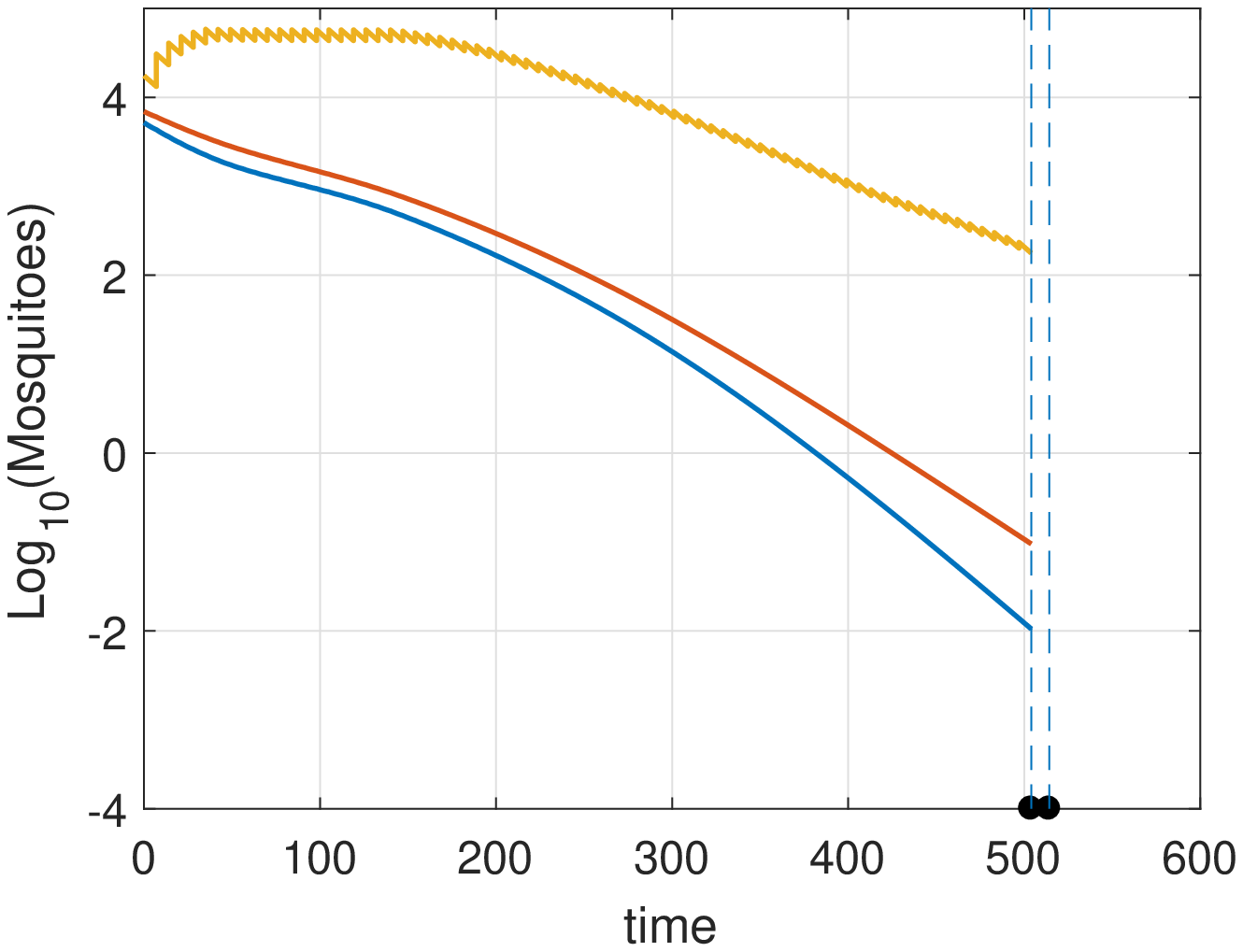}  & \includegraphics[width=.33\textwidth]{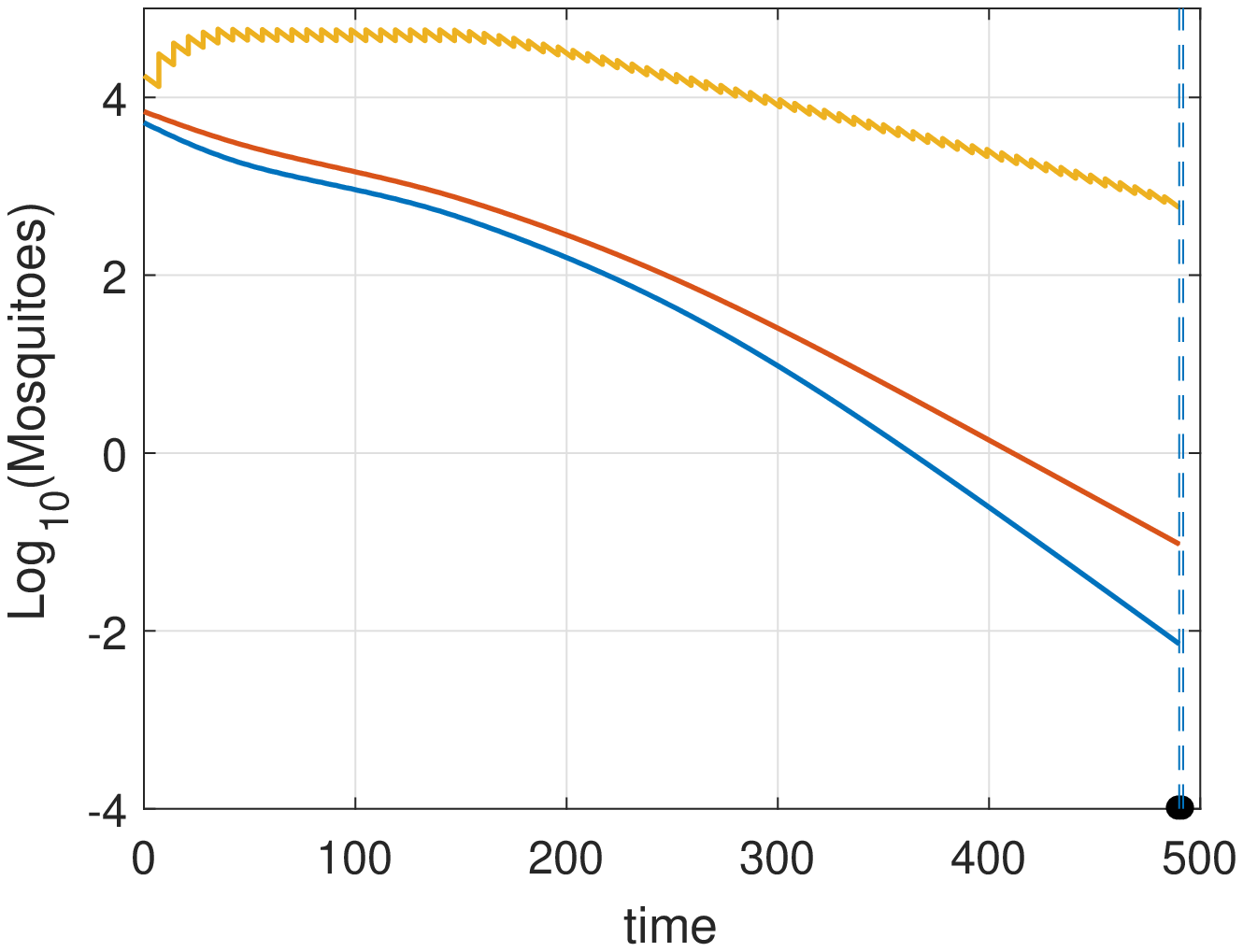} \\
 $\mathcal{P}_3=0, \; \hat{T}^{*}=504$ days; & $\mathcal{P}_3=10^3, \; \hat{T}^{*}=504$ days & $\mathcal{P}_3=10^5, \; \hat{T}^{*}=490$ days \\
$\text{CNSM}(U_n)= 434,820$ indv/ha & $\text{CNSM}(U_n)= 435,420$ indv/ha & $\text{CNSM}(U_n)= 457,500 $  indv/ha
\end{tabular}
\end{center}
\caption{\emph{Upper row:} suboptimal release programs $\{ U_n \}$ for $\mathcal{P}_3 \in \big\{ 0, 10^3, 10^5 \big\}$ with $\tau=7$ days. \emph{Lower row:} Time evolution of mosquito populations under respective $\{ U_n \}$: wild males $\log_{10} M(t)$ (blue-colored curves), wild females $\log_{10} F(t)$ (red-colored curves), and sterile males $\log_{10} M_S(t)$ (yellow-colored curves).  }
\label{fig2}
\end{figure}

As shown in the upper rows of Figures~\ref{fig2} and~\ref{fig3}, the release pick-values ($\tau u_{\max}$ per hectare) are known in advance, and the decision-makers may choose the frequency of releases ($\tau$ days) in accordance with the mass-rearing capacities of sterile insects available at situ.

\begin{figure}[t!]
\begin{center}
\begin{tabular}{ccc}
& suboptimal impulsive & \\
& release strategies & \\
\includegraphics[width=.33\textwidth]{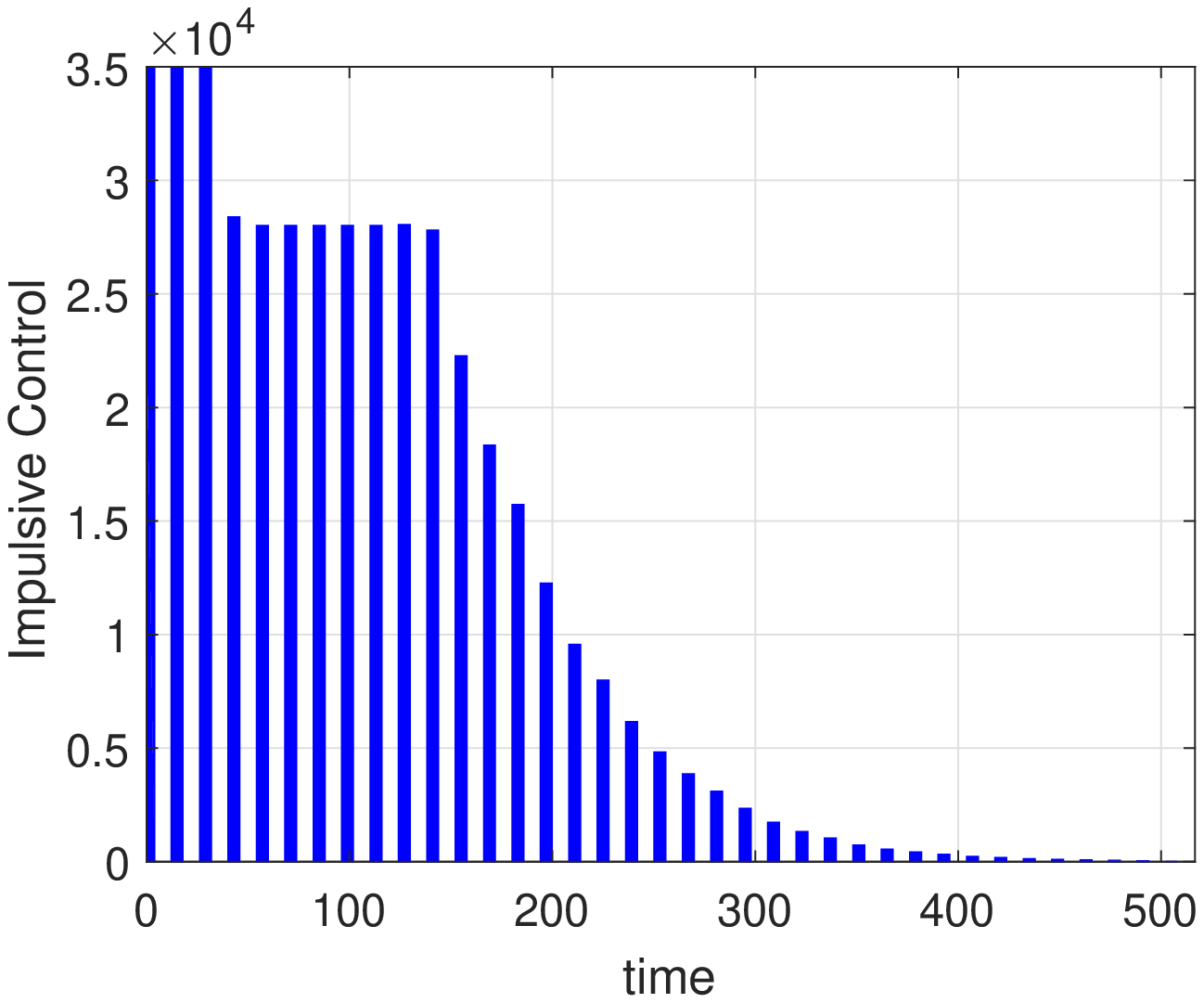} &
\includegraphics[width=.33\textwidth]{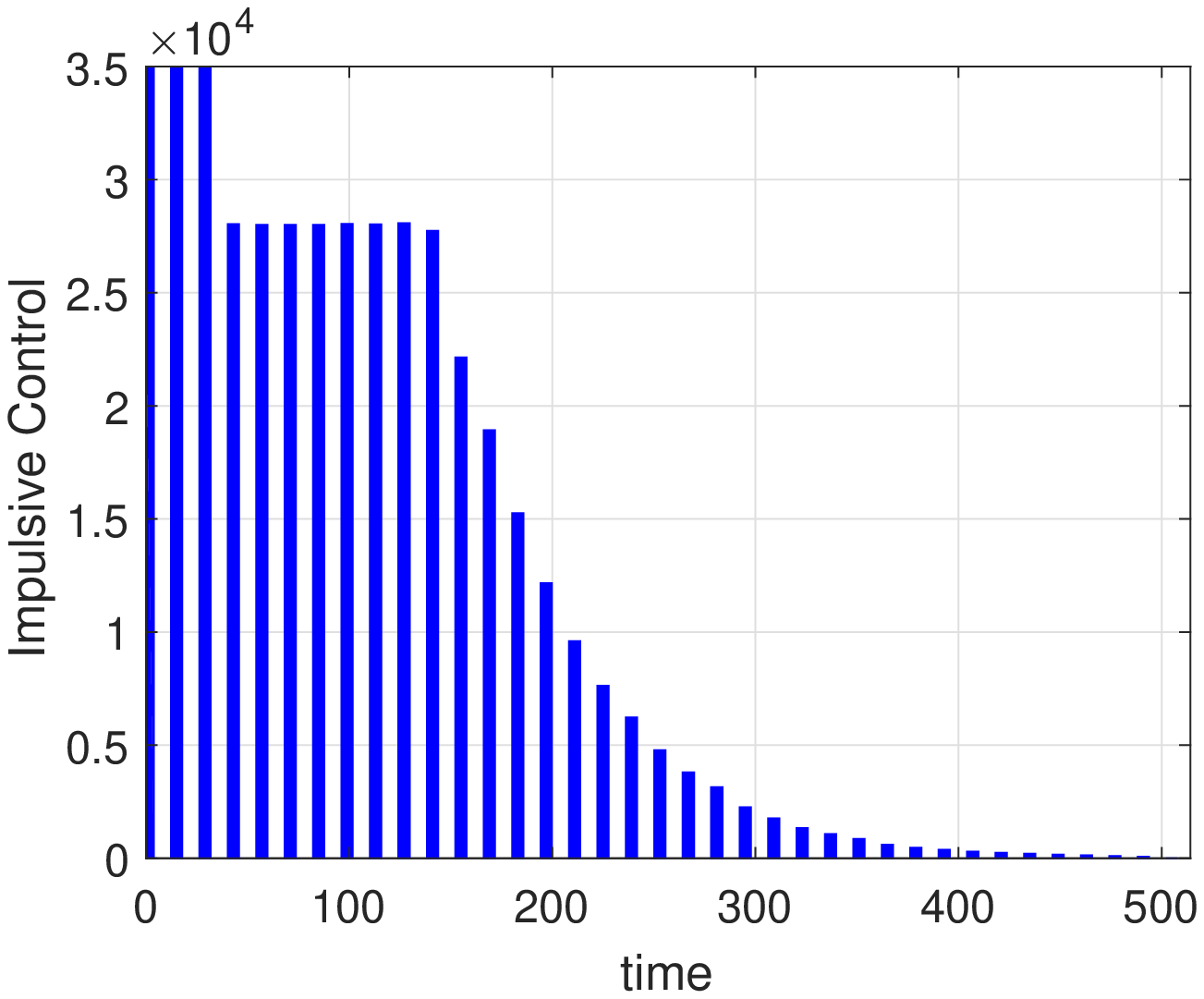} &  \includegraphics[width=.33\textwidth]{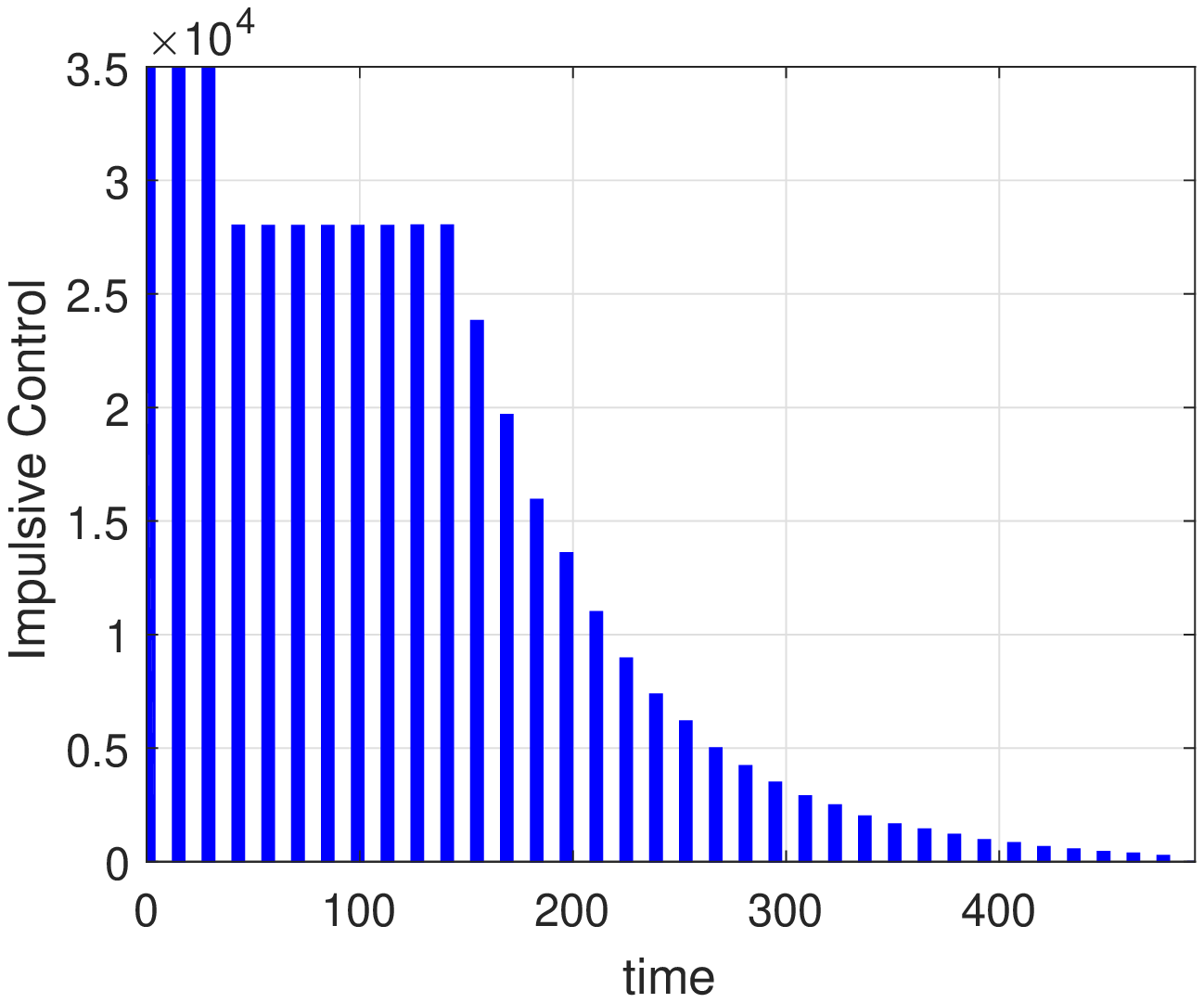} \\
& & \\
& Trajectories of suboptimal states & \\
& & \\
\includegraphics[width=.33\textwidth]{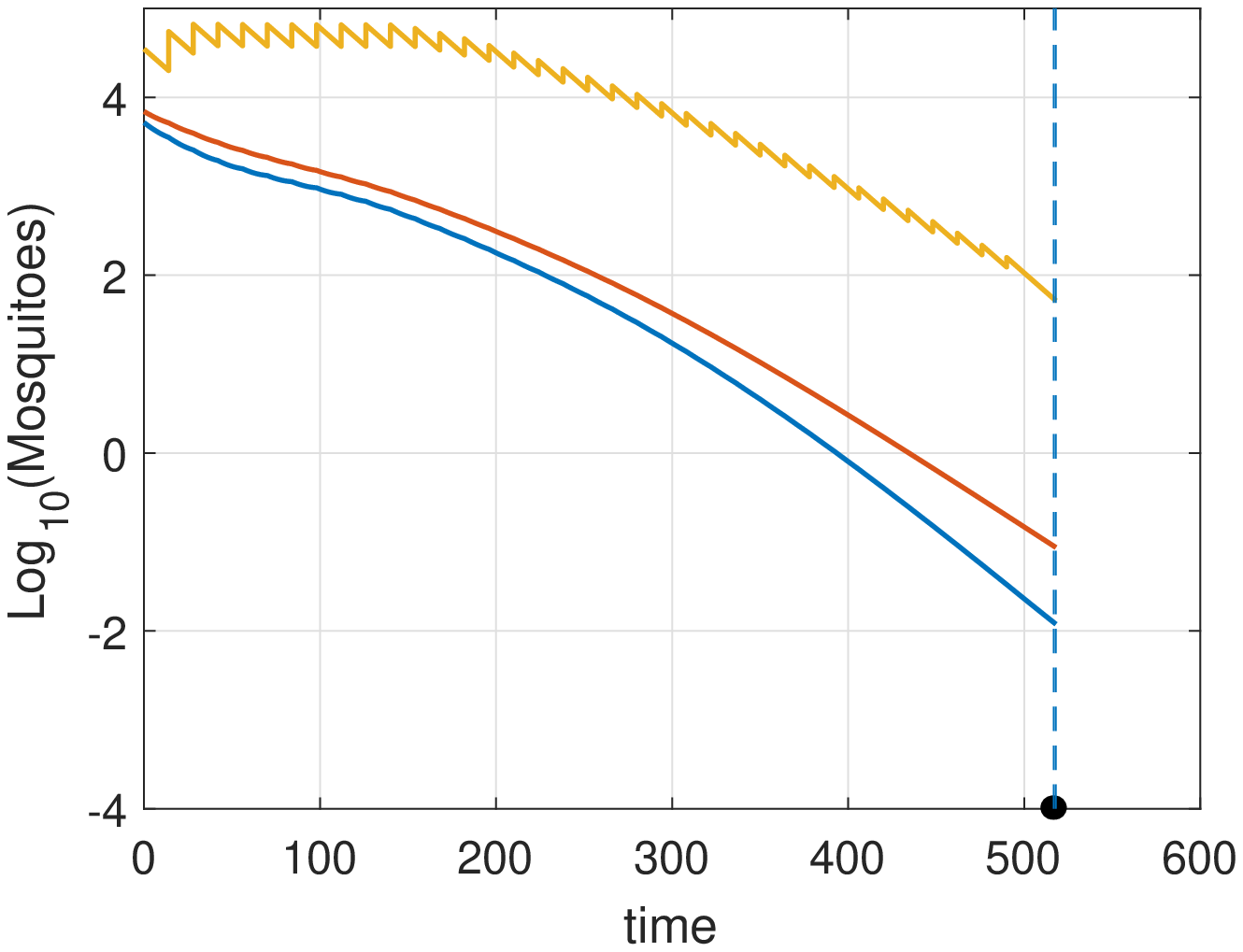} &  \includegraphics[width=.33\textwidth]{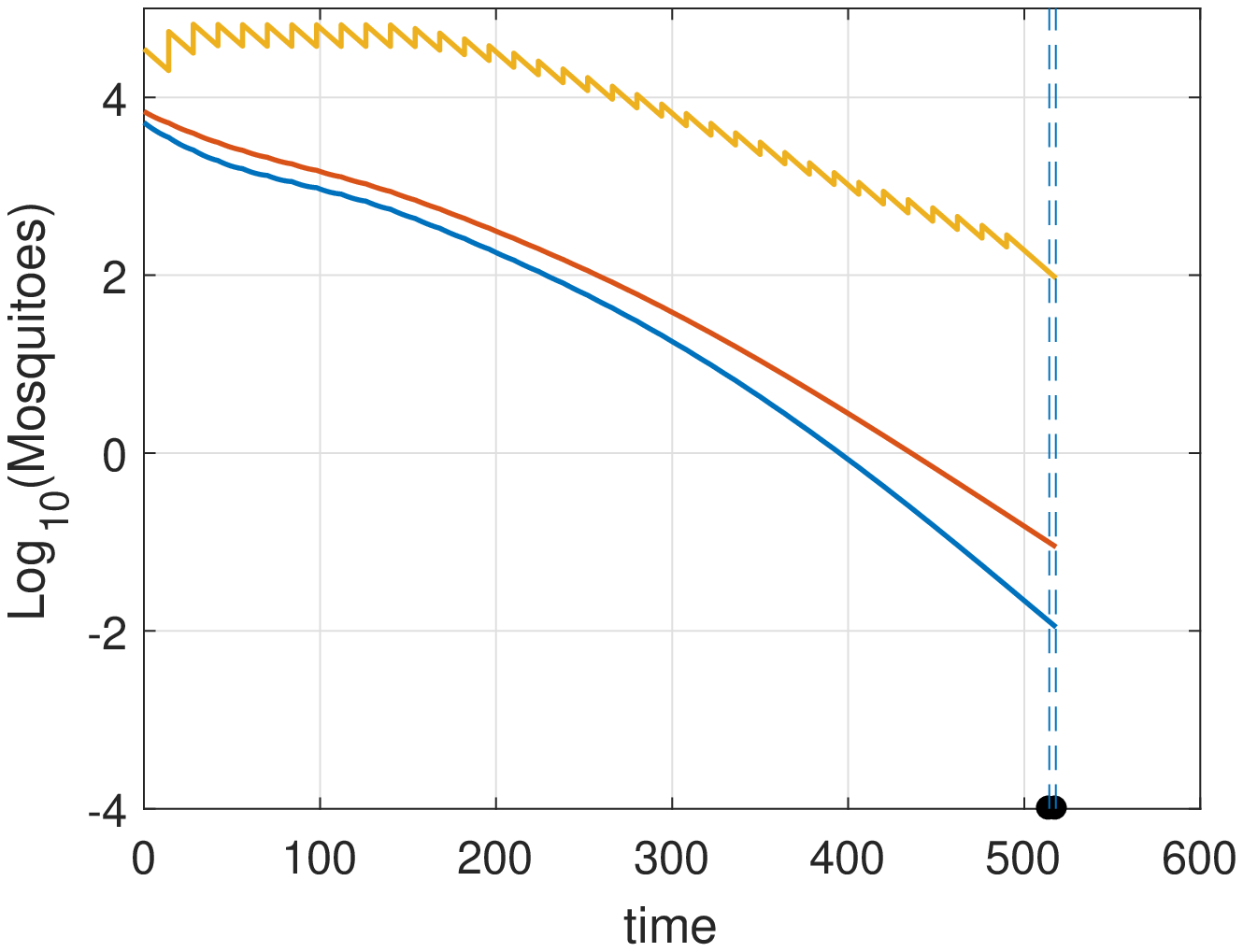}  & \includegraphics[width=.33\textwidth]{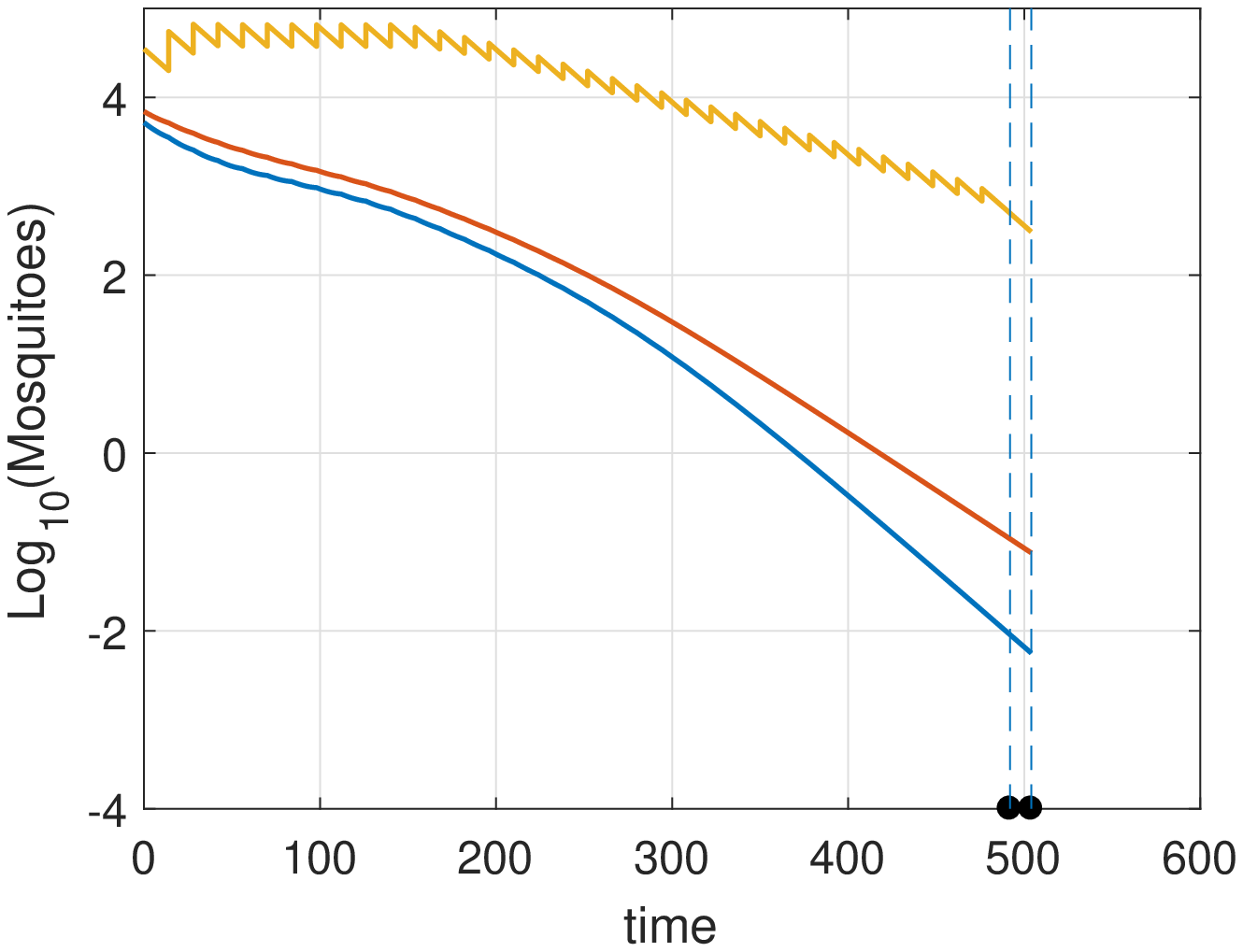} \\
 $\mathcal{P}_3=0, \; \hat{T}^{*}=518$ days; & $\mathcal{P}_3=10^3, \; \hat{T}^{*}=518$ days & $\mathcal{P}_3=10^5, \; \hat{T}^{*}=504$ days \\
$\text{CNSM}(U_n)= 442,480$ indv/ha & $\text{CNSM}(U_n)= 442,550$ indv/ha & $\text{CNSM}(U_n)= 464,050 $  indv/ha
\end{tabular}
\end{center}
\caption{\emph{Upper row:} suboptimal release programs $\{ U_n \}$ for $\mathcal{P}_3 \in \big\{ 0, 10^3, 10^5 \big\}$ with $\tau=14$ days. \emph{Lower row:} Time evolution of mosquito populations under respective $\{ U_n \}$: wild males $\log_{10} M(t)$ (blue-colored curves), wild females $\log_{10} F(t)$ (red-colored curves), and sterile males $\log_{10} M_S(t)$ (yellow-colored curves).  }
\label{fig3}
\end{figure}

In the following subsection, we compare our results displayed by Figures~\ref{fig2} and~\ref{fig3} for feedforward suboptimal impulsive SIT-control programs with mixed open/closed-loop impulsive SIT-control programs designed in \cite{Bliman2019}.

\subsection{Discussion of results}

It is worth noting that our simulations have been performed using exactly the same numerical values of the model's parameters (provided in Table~\ref{tab1}) as in the work authored by Bliman \emph{et al} \cite{Bliman2019}. Therefore, our results are comparable with those obtained in \cite{Bliman2019}.

It should be pointed out that suboptimal impulsive SIT-control programs designed for $\mathcal{P}_3=10^{3}$ (see middle columns in Figures~\ref{fig2} and~\ref{fig3}) do not make much sense. Namely, they require the same time $\hat{T}^{*}$ to reach elimination of the wild mosquitoes as those designed for $\mathcal{P}_3=0$ (see left columns in Figures~\ref{fig2} and~\ref{fig3}) while demanding to release more sterile insects. Therefore, these two suboptimal release programs are excluded from further consideration, and we focus on two key options with regards to priorities for decision-making, namely: (1) disregarding the time appreciation $\mathcal{P}_3=0$; (2) encouraging the time appreciation $\mathcal{P}_3=10^5$.

\begin{table}[t!]
  \centering
      \begin{tabular}{|p{22.5em}|c|c|c|}
    \toprule
    \multicolumn{1}{|c|}{\textbf{\large Impulsive release program}} & \multicolumn{1}{p{6.5em}|}{\textbf{Cumulative number of released sterile males}} & \multicolumn{1}{p{7.43em}|}{\textbf{Number of weeks needed to reach elimination}} & \multicolumn{1}{p{5.5em}|}{\textbf{Number of nonzero releases}} \\
    \midrule
    \multicolumn{1}{|r|}{} & \multicolumn{3}{c|}{Frequency of releases  = 7 days} \\
    \midrule
    Suboptimal without time appreciation ($\mathcal{P}_3 =0$) &       434.820    & 72    & 72 \\
    \midrule
    Suboptimal with time appreciation ($\mathcal{P}_3 =10^5$)  &       457.500    & 70    & 70 \\
    \midrule
    Mixed with synchronized measurements and larger control gain ($p=1, k=0.2/\cN_F$, \cite{Bliman2019}) &       450.668    & 72    & 72 \\
    \midrule
    Mixed with synchronized measurements and smaller control gain ($p=1, k=0.99/\cN_F$, \cite{Bliman2019}) &       457.489    & 246   & 246 \\
    \midrule
    Mixed with sparse measurements and larger control gain ($p=4, k=0.2/\cN_F$, \cite{Bliman2019}) &       534.849    & 65    & 53 \\
    \midrule
    Mixed with sparse measurements and smaller control gain ($p=4, k=0.99/\cN_F$, \cite{Bliman2019}) &       450.077    & 69    & 53 \\
    \midrule
    \multicolumn{1}{|r|}{} & \multicolumn{3}{c|}{Frequency of releases  = 14 days} \\
    \midrule
    Suboptimal without time appreciation ($\mathcal{P}_3 =0$)  & 442.480   & 74    & 37 \\
    \midrule
    Suboptimal with time appreciation ($\mathcal{P}_3 =10^5$)  &       464.050    & 72    & 36 \\
    \midrule
    Mixed with synchronized measurements and larger control gain ($p=1, k=0.2/\cN_F$, \cite{Bliman2019}) &       465.187    & 72    & 36 \\
    \midrule
    Mixed with synchronized measurements and smaller control gain ($p=1, k=0.99/\cN_F$, \cite{Bliman2019}) &       427.701    & 136   & 68 \\
    \midrule
    Mixed with sparse measurements and larger control gain ($p=4, k=0.2/\cN_F$, \cite{Bliman2019}) &       499.497    & 66    & 25 \\
    \midrule
    Mixed with sparse measurements and smaller control gain ($p=4, k=0.99/\cN_F$, \cite{Bliman2019}) &       449.099    & 74    & 28 \\
    \bottomrule
    \end{tabular}
  \label{tab2}
  \caption{Summary of simulation data for suboptimal (open-loop) and mixed (open/closed-loop) impulsive SIT-control programs}
\end{table}

Table~\ref{tab2} summarizes the key features of simulation data for suboptimal (open-loop) and mixed (open/closed-loop, \cite{Bliman2019}) impulsive SIT-control programs, such as the cumulative number of sterile males needed for successful implementation of the SIT-based campaign, the number of weeks to reach elimination of wild populations, and the number of effective (nonzero) releases to be performed during the whole SIT campaign.

It should be noted that the cumulative number of sterile males needed for successful implementation of an impulsive suboptimal SIT-control program is calculated as $\tau \sum \limits_{n=0}^{\infty} U_n$. In the first column of Table~\ref{tab2}, the control gain is defined as $\left( \dfrac{1}{k} -1 \right)$ and the parameter $k$ is chosen to satisfy $0 < k < \dfrac{1}{\cN_F}$. Thus, smaller values of $k$ implies a larger control gain, while its larger values imply a smaller control gain\footnote{The detailed explanation regarding the choice of $k$ is provided in \cite{Bliman2019}, where two values of $k$ are considered: $k=0.2/\cN_F$ and $k=0.2/\cN_F$ that correspond to the control gain of the order $\approx 378$ and $\approx 76$, respectively.}. Furthermore, $p=1$ stands for the measurements that are synchronized with the releases, while $p=4$ expresses the sparser measurements to be carried out every $4\tau$ days.

According to the second column of Table~\ref{tab2}, the suboptimal impulsive release programs require to mass-rear a lesser number of sterile mosquitoes than mixed open/closed-loop strategies with an exception of the one bearing $p=1, k=0.99/\cN_F$ (that is, a mixed strategy with synchronized measurements performed every $\tau=14$ days and a smaller control gain for the feedback mode). However, this release program needs the longest time to reach the elimination of wild mosquitoes besides numerous measurements of wild population sizes it requires to perform.

Analyzing the data from the third and fourth columns of Table~\ref{tab2}, it becomes clear that suboptimal release programs render about the same benefits as the mixed open/closed-loop programs based on the synchronized measurements while not requiring for assessments of wild population sizes and needing a lesser number of sterile insects. On the other hand, mixed open/closed-loop programs based on sparse measurements display better results than suboptimal open-loop programs in terms of the overall time needed to reach elimination and the number of effective (nonzero) releases. However, their advantages are counterpoised by the greater number of sterile insects needed for successful implementation of the SIT-control campaign and the extra costs for performing real-time assessments of wild population sizes.

In the end, the ultimate choice of the release program for SIT-control campaign must be made by practitioners after evaluating the realistic costs related to implementation of the SIT-control campaign, such as the costs for mass-rearing of larger/smaller cohorts of sterile insects, logistics costs for accomplishment of a single release, as well as the extra costs for taking a single real-time measurement of the wild population sizes.

\section{Conclusions}
\label{sec-con}

In this work, we have applied the dynamic optimization approach for design of feedforward (open-loop) continuous-time programs for SIT-control, which do not require to assess the sizes of wild populations in real time. We have also proposed their more realistic suboptimal variants for practical implementation in the field.

All designed programs have a very clear structure and exhibit monotonicity with respect to the quantity of sterile insects to be released at each day $t$ (in the case of continuous-time optimal programs) or at the commencement of each period of $\tau$ days (in the case of impulsive suboptimal programs). The fact that release sizes gradually decrease during the SIT-control campaign may actually help in the adequate planning of the infrastructure and underlying logistics of the SIT-control interventions.

Another advantage of open-loop impulsive release program is the anticipated knowledge of the release pick-values ($\tau u_{\max}$) that enables the proper choice of the release frequency $\tau$ in accordance with the mass-rearing capacities of sterile insects available at situ.

We have also explored the impact of the exogenous parameter $\mathcal{P}_3$ (expressing the time appreciation) on the structure of the designed release programs and their respective outcomes. From the latter, we have detected the existence of a certain tradeoff between the cumulative number of the released sterile insects needed to reach elimination and the overall duration of the SIT-control campaign. Namely, by increasing (significantly) the value of $\mathcal{P}_3$ (\emph{ceteris paribus}) the overall time $T^{*}$ (and also $\hat{T}^{*}$) can be (slightly) reduced on the cost of increasing the overall quantity of sterile males to be released during the SIT-based campaign (cf. left and right columns in Figures~\ref{fig1}-\ref{fig3}). However, the general structure of both optimal and suboptimal release programs exhibit robustness and resilience to variations in $\mathcal{P}_3$.

In this work, we have tried to keep the value of another exogenous parameter -- the daily mass-rearing capacity of sterile insects $u_{\max}$ -- at a moderate realistic level of $2,500$ individuals per hectare. However, a temperate enhancement (reduction) of $u_{\max}$ does not affect much the general structure of optimal and suboptimal release programs. The only difference we detected through numerical experiment (that are left beyond the scope of this paper) consists in less (more) initial releases to be performed at the maximum release capacity $u_{\max}$ per day (or $\tau u_{\max}$ every $\tau$ days).

Finally, the methodology presented in this paper can be easily adjusted to different biological characteristics of other mosquito or pest species (such as $\rho, \beta, \mu_M, \mu_F$) and of sterile insects ($\gamma, \mu_S$), while accounting for the total area of the target locality\footnote{It must be stressed again that all our simulations were carried out for a ``standardized'' area comprising 1 hectare.} and the underlying mass-rearing capacitates to produce a sufficient quantity of sterile males every $\tau$ days. Therefore, the outcomes of this study could virtually help in the design and implementation of successful SIT-control campaigns.

\section*{Acknowledgements}
Support from the Colciencias and ECOS-Nord Program (Colombia: Project CI-71089; France: Project C17M01) is kindly acknowledged. DC and OV were supported by the inter-institutional cooperation program MathAmsud (18-MATH-05, MOVECO project). YD is partially endorsed by the GEMDOTIS project, funded by the call ECOPHYTO 2018 (Action 27). YD is also (partially) supported by the DST/NRF SARChI Chair in Mathematical Models and Methods in Biosciences and Bioengineering at the University of Pretoria (grant 82770).

\bibliographystyle{plain}
\bibliography{BCDV_2019_bib}

\end{document}